\journal{arXiv}
\theoremstyle{plain}
\newtheorem{theorem}{Theorem}
\newtheorem{proposition}{Proposition}
\newtheorem{lemma}{Lemma}
\theoremstyle{definition}
\begin{document}

\begin{frontmatter}

\title{Nonparametric estimation of multivariate copula using empirical bayes methods}


\author[1]{Sujit Ghosh}
\author[2]{Lu Lu\corref{mycorrespondingauthor}}

\address[1]{Department of Statistics, North Carolina State University}
\address[2]{Department of Statistics, North Carolina State University}

\cortext[mycorrespondingauthor]{Corresponding author. Email address: \url{llu2@ncsu.edu}}

\begin{abstract}
In the field of finance, insurance, and system reliability, etc., it is often of interest to measure the dependence among variables by modeling a multivariate distribution using a copula. The copula models with parametric assumptions are easy to estimate but can be highly biased when such assumptions are false, while the empirical copulas are non-smooth and often not genuine copula making the inference about dependence challenging in practice. As a compromise, the empirical Bernstein copula provides a smooth estimator but the estimation of tuning parameters remains elusive. In this paper, by using the so-called empirical checkerboard copula we build a hierarchical empirical Bayes model that enables the estimation of a smooth copula function for arbitrary dimensions. The proposed estimator based on the multivariate Bernstein polynomials is itself a genuine copula and the selection of its dimension-varying degrees is data-dependent. We also show that the proposed copula estimator provides a more accurate estimate of several multivariate dependence measures which can be obtained in closed form.
We investigate the asymptotic and finite-sample performance of the proposed estimator and compare it with some nonparametric estimators through simulation studies. An application to portfolio risk management is presented along with a quantification of estimation uncertainty.
\end{abstract}

\begin{keyword} 
Bernstein copula \sep dependence measures \sep empirical checkerboard copula \sep uncertainty measurement
\MSC[2020] Primary 62H12 \sep
Secondary 62F15
\end{keyword}

\end{frontmatter}

\section{Introduction\label{sec:intro}}

Copula models are useful tools for the analysis of multivariate data since by using the well-known Sklar's theorem, any multivariate joint distribution can be decomposed into its univariate marginal distributions and a copula function, which allows for capturing the arbitrary dependence structure between several random variables. As a result, copulas have been widely used in the field of finance, insurance, system reliability, etc. among many other application areas. See, e.g., \citet{jaworski2010copula}, \citet{joe2014dependence} and \citet{nelsen2007introduction} for more details about copulas and their applications. 

Given a random vector $(X_1, \ldots, X_d)$ with joint cumulative distribution function (CDF) $F$ and continuous marginal CDFs $F_j, j \in \{1, \dots, d \}$, by Sklar's theorem (\citet{sklar1959fonctions}), the CDF $F$ can be expressed uniquely as $F(x_1,\ldots, x_d) = C(F_1(x_1),\ldots, F_d(x_d))$, where $C(\cdot)$ denotes the copula function.
A copula is itself the joint CDF of a random vector $(U_1 = F_1(X_1), \ldots,U_d = F_d(X_d)) $ having its marginals as uniform distributions on $[0, 1]$, henceforth denoted by $Unif[0, 1]$. It is to be noted that the original results in \citet{sklar1959fonctions} are also applicable to discrete-valued random variables. However, the focus of this paper is modeling continuous-valued multivariate random vectors. Thus, for the rest of the paper, we assume the marginal CDFs $F_j, j \in \{1, \dots, d \}$ are absolutely continuous.

As copula plays an important role in capturing the general dependence structure between multiple variables, it is critical to estimate copula in an accurate way, especially in higher dimensions where the dependence structure becomes much more complicated, often illusive and may even be supported on a lower-dimensional manifold. One of the primary goals of this paper is to estimate a smooth copula function $C$ from a random sample of $n$ independent identically distributed (iid) observations $(X_{i1}, \ldots, X_{id})\stackrel{iid}{\sim} F(x_1,\ldots, x_d)$ for $i \in \{1,\ldots,n \}$.

Many parametric families have been proposed for modeling multivariate copulas and there has been previous work addressing the corresponding parametric estimation methods. For detailed discussions, see, e.g., \citet{joe2005asymptotic},  \citet{joe2014dependence},  \citet{mcneil2009multivariate},  \citet{nelsen2007introduction}, \citet{smith2011bayesian} and \citet{vzevzula2009multivariate},  etc. 
However, no matter how sophisticated and flexible the parametric models that we may use, they might still lead to biased copula estimates when the parametric model is misspecified and thus may not be able to capture complex dependence structures required in practice. Compared to standard multivariate copulas, vine copula models allow for more flexibility in capturing complex dependency structures using appropriate vine tree structures by choosing bivariate copula families for each node of pair copulas from a vast array of parametric bivariate copulas. But it is often challenging to obtain estimates of multivariate dependence measures which involve high-dimensional integrals that are often algebraically intractable by using vine copulas.

Thus, recognizing some of the above-mentioned limitations of parametric copula models, a variety of nonparametric estimators have been proposed for multivariate copula estimation. Most of the available nonparametric estimators rely on the empirical methods, e.g., the empirical copula and its multilinear extension, the empirical multilinear copula (\citet{deheuvels1979fonction}; \citet{fermanian2004weak}; \citet{genest2017asymptotic}), or kernel-based methods such as local linear estimator (\citet{chen2007nonparametric}), mirror reflection estimator (\citet{gijbels1990estimating}) and improvements of these two estimators (\citet{omelka2009improved}). See also\citet{remillard2009testing} and \citet{scaillet2002nonparametric} for other nonparametric copula estimators. However, except for the empirical multilinear copula, most of these estimators are valid copulas only asymptotically, meaning that they are not necessarily genuine copulas for finite samples. Moreover, multivariate dependent measures (e.g., Spearman's rho, Kendall's tau, etc.) based on such estimated copula could take values outside of their natural range, thus making them unattractive in practice.  On the other hand, there has been recent work on Bayesian nonparametric methods for estimating general d-dimensional copula and among many others, a noteworthy Bayesian nonparametric model is based on an infinite mixture of multivariate Gaussian or the skew-normal copulas proposed by \citet{wu2014bayesian}. The infinite mixture models provide a lot of flexibility in modeling various dependence structures but those typically lack simple (analytic) expressions of dependence measures making them harder to compute in practice. 

The primary focus of this paper is the nonparametric estimation of multivariate copulas for any arbitrary dimensions that are genuine copulas for any finite sample size and are uniformly consistent as the sample size gets large. We consider an extension of the Bernstein copula (\citet{sancetta2004bernstein}), which is a family of copulas defined in terms of multivariate Bernstein polynomials. One of the primary advantages of the Bernstein copula is that it provides a class of nested models that are able to uniformly approximate any multivariate copula with minimal regularity conditions. A simple case of the Bernstein copula is the empirical Bernstein copula, which is a nonparametric copula estimator proposed by \citet{sancetta2004bernstein}. The asymptotic properties of the empirical Bernstein copula are well studied in \citet{janssen2012large} and its application in testing independence is described in \citet{belalia2017testing}. The application of the Bernstein copula to the modeling of dependence structures of non-life insurance risks is provided in \citet{diers2012dependence}, among many other applications.

However, the empirical Bernstein copula has two main drawbacks that could prevent us from obtaining accurate copula estimation for small samples: (i) the empirical Bernstein copula is not necessarily a valid copula itself, which is a common disadvantage for most nonparametric copula estimators; and (ii) the degrees of Bernstein polynomials are often set to be equal to an integer across different dimensions, which limits the flexibility of the Bernstein copula and thus might not be appropriate for large dimensions. 

In order to address the above-described problem (i), \citet{segers2017empirical} show that the empirical Bernstein copula is a genuine copula if and only if all the polynomial degrees are divisors of the sample size, and further propose a new copula estimator called the empirical beta copula, which can be seen as a special case of the empirical Bernstein copula when the degrees of Bernstein polynomials are all set equal to the sample size. The empirical beta copula is a valid copula itself and has been shown to outperform some classical copula estimators in terms of bias and variance. But it always has a larger variance compared to the empirical Bernstein copula with smaller polynomial degrees. It is surprising that much less attention has been given to the problem (ii), and even for equally set degrees, there has been limited work on the data-dependent choice of degrees in the literature. \citet{janssen2012large} recommend an optimal choice of the equal degrees in the bivariate case by minimizing the asymptotic mean squared error. Nevertheless, such a choice requires the knowledge of the first- and second-order partial derivatives, which might not be easy to estimate in practice.  \citet{burda2014copula} put priors on the polynomial degrees, however, their priors don't rely on data or sample size and they use multivariate Bernstein density instead of Bernstein copula density. The Dirichlet process assigned as the prior for the copula doesn't guarantee the copula estimate to be a valid copula itself.  Besides, the number of weights grows exponentially as the dimension increases, leading to computational inefficiency of MCMC methods for larger dimensions. To the best of our knowledge, \citet{lu2020nonparametric} first develop a data-dependent grid search algorithm for the selection of polynomial degrees which has shown superior empirical estimation properties for small to moderate sized samples. But the methodology is limited to bivariate cases and extension to larger dimensions remained challenging.

For the purpose of addressing the two problems as described above, we introduce a new nonparametric smooth estimator for multivariate copula that we call the empirical checkerboard Bernstein copula (ECBC), which is constructed by extending the Bernstein copula allowing for varying degrees of the polynomials. It is shown to be a genuine smooth copula for any number of degrees and any finite sample size. Furthermore, we develop an empirical Bayesian method that takes the data into account to automatically choose the degrees of the proposed estimator using its posterior distribution thereby accounting for the uncertainty of such tuning parameter selection. As shown in  \citet{segers2017empirical}, larger degrees of the Bernstein copula lead to a larger variance of the estimation, so a choice of degrees that is relatively small compared to the sample size but sufficient for a good copula estimation is desirable. The degrees are allowed to be dimension-varying within the Bayesian model, which provides much more flexibility and accuracy, especially in higher dimensions.

It is especially noteworthy that while the focus of the paper is to estimate the copula function, it is straightforward to get a closed-form estimate of the corresponding copula density by taking derivatives of the ECBC. However, direct estimation of a closed-form copula function has many advantages compared to first estimating a copula density, e.g., it is often easier to differentiate than to integrate for higher dimensions. Besides, for those copulas which are not absolutely continuous such as Marshall-Olkin copulas (\citet{embrechts2001modelling}) having support on a possibly lower-dimensional manifold, the direct estimation of the copula density could be difficult. Owing to the closed form of the estimated copula function and its density, it can be shown that the proposed ECBC allows for straightforward estimation of various dependence measures.

The rest of the paper is organized as follows: in Section \ref{sec:meth} we present an empirical Bayes nonparametric copula model. In Section \ref{sec:multi_dependence}, we derive the closed-form expression of estimates of popular multivariate dependence measures based on the novel methodology of multivariate copula estimation. We then illustrate the performance of the proposed methodology in Section \ref{sec:numerical}. \ref{sec:finite} shows the finite-sample performance for bivariate cases. The accuracy of the estimation of multivariate dependence measures is investigated in Section \ref{sec:accuracy}. Section \ref{sec:degree} illustrates the estimation of tuning parameters of the proposed ECBC copula estimator and the comparison with the empirical Bernstein copulas is provided in Section \ref{sec:comparison}.  Section \ref{sec:portfolio} provides an application to portfolio risk management. Finally, we make some general comments in Section \ref{sec:conc}.

\section{An Empirical Bayes Nonparametric Copula Model \label{sec:meth}}

Suppose we have i.i.d. samples $(X_{i1},\ldots, X_{id}) \sim F(x_1, \ldots, x_d), i \in \{1,\ldots,n\}$, where $F$ is a cumulative distribution function and $F_j$ is the absolutely continuous marginal CDF of the $j$-th component. By Sklar 's theorem (\citet{sklar1959fonctions}), there exists a unique copula $C(\cdot)$ such that
\begin{equation*}
F(x_1,\ldots, x_d) = C(F_1(x_1),\ldots, F_d(x_d)), \;\;\forall\;(x_1,\ldots,x_d) \in \mathbb{R}^d,
\end{equation*}
and 
\begin{equation*}
(F_1(X_1),\ldots, F_d(X_d)) \sim C.
\end{equation*}
The Bernstein copula is a family of copulas defined in terms of Bernstein polynomials and it was first introduced by \citet{sancetta2004bernstein}. It is a flexible model that can be used to uniformly approximate any copula. The Bernstein polynomial with degrees $(m_1, \ldots, m_d)$ of a function $C: [0,1]^d \rightarrow \mathbb{R}$ is defined as 
\begin{equation}
\label{eq:bp}
B_m(C)(\mathbf{u}) = \sum_{k_1 = 0}^{m_1} \dots \sum_{k_d = 0}^{m_d} C\left(\frac{k_1}{m_1},\ldots,\frac{k_d}{m_d}\right) \prod_{j=1}^d {m_j \choose k_j } u_j^{k_j} (1 - u_j)^{m_j - k_j},
\end{equation}
and $B_m(C)$ is called the Bernstein copula when C is a copula. 

A general estimation for the Bernstein copula of an unknown copula C is the empirical Bernstein copula (\citet{sancetta2004bernstein}) $B_m(C_n)$, where $C_n$ is the rank-based empirical copula. We denote the empirical Bernstein copula as 
\begin{equation*}
C_{m,n}(\mathbf{u}) = \sum_{k_1 = 0}^{m_1} \dots \sum_{k_d = 0}^{m_d} \hat{\theta}_{k_1,\ldots,k_d} \prod_{j=1}^d {m_j \choose k_j } u_j^{k_j} (1 - u_j)^{m_j - k_j},
\end{equation*}
where 
\begin{equation*}
\hat{\theta}_{k_1,\ldots,k_d} = C_n\left(\frac{k_1}{m_1}, \ldots, \frac{k_d}{m_d}\right) = \frac{1}{n} \sum_{i=1}^{n} \prod_{j=1}^d \mathbb{I}\left(F_{nj}^{*}(X_{ij}) \leq \frac{k_j}{m_j}\right).
\end{equation*}
where $\mathbb{I}(\cdot)$ denotes indicator function and slightly modified empirical marginal distribution functions are defined as 
\begin{equation*}
F_{nj}^{*}(x_j) =  \frac{1}{n+1} \sum_{i=1}^{n} \mathbb{I}(X_{ij} \leq x_j), \; \; \mbox{for}\;\;j \in \{1, \dots, d \},
\end{equation*}
where the modification $1/(n + 1)$ instead
of $1/n$ modifies the standard empirical marginal distribution to be away from 1 in order to reduce potential problems at boundaries.

However, the empirical Bernstein copula $C_{m,n}$ is not guaranteed to be a valid copula for finite samples as the empirical copula $C_n$ is not necessarily a genuine copula. \citet{segers2017empirical} show that the empirical Bernstein copula $C_{m,n}$ is a copula if and only if all the degrees $m_1, . . . ,m_d$
are divisors of $n$. In order to obtain a valid copula estimation for any degrees, we replace the empirical copula $C_n$ with the empirical checkerboard copula $C^{\#}_n$, which is a simple multilinear extension of the empirical copula defined as 
\begin{equation*}
C^{\#}_n (\mathbf{u}) = \frac{1}{n} \sum_{i=1}^{n} \prod_{j=1}^d \min(\max((nu_j - R_{i,j}^{(n)} + 1), 0), 1),
\end{equation*}
where $R_{i,j}^{(n)}$ is the rank of $X_{ij}$ among $X_{1j}, \ldots, X_{nj}$, see, e.g., \citet{carley2002new} and \citet{li1997approximation} for more details. Notice that the main difference between the empirical copula $C_{n}$ and the empirical checkerboard
copula $C^{\#}_n$ is that $C^{\#}_n$ is a genuine copula, so we can obtain a valid copula estimation $C^{\#}_{m,n}$ taking the form 
\begin{equation}
\label{eq:checkboard1}
C^{\#}_{m,n}(\mathbf{u}) = \sum_{k_1 = 0}^{m_1} \dots \sum_{k_d = 0}^{m_d} \tilde{\theta}_{k_1,\ldots,k_d} \prod_{j=1}^d {m_j \choose k_j } u_j^{k_j} (1 - u_j)^{m_j - k_j},
\end{equation}
where 
\begin{equation}
\label{eq:checkboard2}
\tilde{\theta}_{k_1,\ldots,k_d} = C^{\#}_n\left(\frac{k_1}{m_1}, \ldots, \frac{k_d}{m_d}\right) = \frac{1}{n} \sum_{i=1}^{n} \prod_{j=1}^d \min(\max\left(\left(n \frac{k_j}{m_j} - R_{i,j}^{(n)} + 1\right), 0), 1\right),
\end{equation}
and we call the proposed {\em empirical checkerboard Bernstein copula (ECBC)}.

Unlike the empirical Bernstein copula, the ECBC is a genuine copula for any degrees $m_1, m_2, \ldots, m_d \in \mathbb{Z}^{+}$ and any fixed sample size $n$. It is known that Bernstein
polynomials with smaller values of degrees $m_j$'s may lead to biased estimates while unnecessary larger degrees of Bernstein polynomials will necessarily lead to larger variances. So it is critical to choose the proper degrees of the ECBC based on a given sample. In order to do that, we develop an empirical Bayes method for choosing `optimal' degrees $(m_1, m_2, \ldots, m_d)$, where $m_j$'s are allowed to be different for different $j \in \{1, \dots, d \}$ and also depend on the random sample of observations.

As illustrated in \citet{sancetta2004bernstein}, using partial derivatives of (\ref{eq:checkboard1}) with respect to each $u_j$ and rearranging, we can obtain the density corresponding to ECBC as follows:
\begin{equation}
\begin{aligned}
c^{\#}_{m,n}(\mathbf{u}) &= \sum_{k_1 = 0}^{m_1 -1 } \dots \sum_{k_d = 0}^{m_d - 1} \tilde{w}_{k_1,\ldots, k_d}  \prod_{j=1}^d m_j {m_j-1 \choose k_j} u_j^{k_j} (1 - u_j)^{m_j - k_j - 1}\\
&= \sum_{k_1 = 0}^{m_1-1} \dots \sum_{k_d = 0}^{m_d-1} \tilde{w}_{k_1,\ldots, k_d}  \prod_{j=1}^d  Beta(u_j, k_j + 1, m_j - k_j),
\label{eq:density1}
\end{aligned}
\end{equation}
where 
\begin{equation}
\begin{aligned}
 \tilde{w}_{k_1,\ldots, k_d}  &= \sum_{l_1 = 0}^{1} \dots \sum_{l_d = 0}^{1} (-1)^{d+l_1+ \ldots + l_d} C^{\#}_n\left(\frac{k_1+l_1}{m_1}, \ldots, \frac{k_d+l_d}{m_d}\right). \\
\label{eq:density2}
\end{aligned}
\end{equation}
Clearly, the Bernstein copula is a mixture of independent Beta distributions leading to a tensor product form. For notational convenience, let us denote
\begin{equation*}
U_{ij} \equiv F_{nj}^{*}(X_{ij}) \; \;i \in \{1,\ldots,n\}, j \in \{1, \dots, d \},
\end{equation*}
Following the work by \citet{gijbels2010positive}, the pseudo-observations $(U_{i1}, \ldots U_{id}), i \in \{1,\ldots,n\}$ can be treated as samples from $(F_1(U_{i1}),\ldots, F_d(U_{id})) \sim C$. We then use this approximation to build an empirical Bayesian hierarchical model:
\begin{equation*}
U_{ij} | L_{ij} \stackrel{ind}{\sim} Beta(L_{ij} + 1, m_j - L_{ij}) , \; \; i \in \{1,\ldots,n\}, j \in \{1, \dots, d \},
\end{equation*}
\begin{equation*}
L_{ij} = \lfloor m_j V_{ij}\rfloor,
\end{equation*}
where $\lfloor a\rfloor$ is the `floor' function denoting the largest integer not exceeding the value $a$ and
\begin{equation*}
(V_{i1} ,\ldots,V_{id}) \stackrel{i.i.d}{\sim} C^{\#}_n(\cdot),  \; \; i \in \{1,\ldots,n\},
\end{equation*}
, i.e., $(V_{i1} ,\ldots,V_{id}), i \in \{1,\ldots,n\}$ are samples from the empirical checkerboard copula $C^{\#}_n$. It then follows that 
\begin{equation*}
\begin{aligned}
\tilde{w}_{k_1,\ldots, k_d} &= \Pr(L_{i1} = k_1, \ldots,  L_{id} = k_d) \\
&=  \Pr\left(\frac{k_1}{m_1} \leq V_{i1} < \frac{k_1+1}{m_1}, \ldots,  \frac{k_d}{m_d} \leq V_{id} < \frac{k_d+1}{m_d}\right).\\
\end{aligned}
\end{equation*}
Based on the proposition 1 in \citet{genest2017asymptotic}, $V_{ij}$ can be drawn using the following hierarchical scheme:
\begin{equation*}
\begin{aligned}
\pi_i \stackrel{i.i.d}{\sim} DisUnif\{1,\ldots,n\}, \; \; i \in \{1,\ldots,n\},
\\
\Lambda_{ij} \stackrel{i.i.d}{\sim} Unif(0, 1) \; \; i \in \{1,\ldots,n\},j \in \{1, \dots, d \}.
\\
V_{ij} = (1-\Lambda_{ij})F_{n,j}(X_{\pi_i j} -) + \Lambda_{ij}F_{n,j}(X_{\pi_i j}),
\end{aligned}
\end{equation*}
where $F_{n,j} (x_j) = 1/n \sum_{i=1}^{n} I(X_{ij} \leq x_j)$ and $DisUnif$ denotes the discrete uniform distribution, i.e., $\Pr[\pi_i=j]=1/ n$ for $j \in \{1, \dots, d \}$. Assuming that there are no ties in the pseudo samples $U_{1j}, \ldots, U_{nj}$ (owing to absolute continuity of marginal distributions or breaking it by random assignment in practice), we can equivalently represent the $V_{ij}$'s more conveniently as
\begin{equation*}
\begin{aligned}
V_{ij} &= (1-\Lambda_{ij})\frac{R_{\pi_i,j}^{(n)}-1}{n} + \Lambda_{ij}\frac{R_{\pi_i,j}^{(n)}}{n}\\
&= \frac{R_{\pi_i,j}^{(n)}-1 + \Lambda_{ij}}{n}.
\end{aligned}
\end{equation*}

Next, to account for the uncertainty in the estimation of the degrees $m_j$'s, we propose to introduce a sample-size dependent empirical prior distribution on the degrees $m_1,\ldots, m_d$ and obtain posterior estimates by Markov Chain Monte Carlo (MCMC) methods. This would not only allow for the almost automatic adaptive estimation of the degrees (based on the observed data) but also allow for quantifying the uncertainty of this crucial tuning parameter vector. Notice that the idea of putting priors on the polynomial degrees has also been adopted by \citet{burda2014copula}. However, their priors don't rely on data or sample size and they use multivariate Bernstein density instead of Bernstein copula density, i.e., the weights are belonged to a simplex without any more constraints. A Dirichlet process with a baseline of uniform distribution on $[0,1]^d$ is assigned as the prior for the copula $C$ in (\ref{eq:bp}), which doesn't guarantee $C$ to be a valid copula. In order to avoid the construction of priors under constraints, we use the empirical estimates for the coefficients of the Bernstein copula instead of assigning priors to them.

Motivated by the asymptotic theory of the empirical Bernstein estimator, e.g., as in \citet{janssen2014note}, we propose the hierarchical shifted Poisson distributions as the prior distribution for $m_1, \ldots m_d$:
\begin{equation}
\label{prior1}
m_j \mid\alpha_j\stackrel{ind}{\sim} Poisson(n^{\alpha_j}) + 1,\; \; j \in \{1, \dots, d \}.
\end{equation}
and 
\begin{equation}
\label{prior2}
\alpha_j \stackrel{i.i.d}{\sim} Unif\Big(\frac{1}{3}, \frac{2}{3}\Big), \; \;j \in \{1, \dots, d \}.
\end{equation}
The following theorem provides the large-sample consistency of the ECBC using the same set of assumptions as required for the large-sample consistency of the empirical checkerboard copula.

\begin{theorem}
\label{consist}
Given the empirical priors distribution of $m_1, \ldots,m_d$ as in (\ref{prior1}) and (\ref{prior2}), and assuming the regularity conditions for the consistency of the empirical checkerboard copula, the proposed ECBC is consistent in the following sense:
\begin{equation*}
E (|| C^{\#}_{m,n} - C|| ) = E \left[ \underset{\mathbf{u} \in [0,1]^d}{\sup} \left| C^{\#}_{m,n}(\mathbf{u}) - C (\mathbf{u}) \right| \right] \stackrel{a.s.}{\rightarrow} 0  \; \; \textit{as} \; \; n \rightarrow \infty.
\end{equation*}
where the expectation is taken with respect to the empirical prior distribution.
\end{theorem}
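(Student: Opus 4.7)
The plan is to split the error via the triangle inequality into a \emph{statistical} piece (how well the empirical checkerboard copula $C^{\#}_n$ approximates the true $C$) and an \emph{approximation} piece (the Bernstein operator applied to the smooth limit $C$). Write $B_m$ for the multivariate Bernstein operator in (\ref{eq:bp}) indexed by the vector of degrees $m=(m_1,\dots,m_d)$; by construction (cf.\ (\ref{eq:checkboard1})--(\ref{eq:checkboard2})), $C^{\#}_{m,n}=B_m(C^{\#}_n)$, so
$$\|C^{\#}_{m,n}-C\|_\infty \;\le\; \|B_m(C^{\#}_n)-B_m(C)\|_\infty + \|B_m(C)-C\|_\infty.$$
Since the Bernstein basis functions are non-negative and sum to one over $(k_1,\dots,k_d)\in\prod_{j=1}^{d}\{0,\dots,m_j\}$, $B_m$ is a contraction in the supremum norm, so the first term is at most $\|C^{\#}_n-C\|_\infty$, which is \emph{free of $m$}. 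Taking the prior expectation then gives
$$E\bigl[\|C^{\#}_{m,n}-C\|_\infty\bigr] \;\le\; \|C^{\#}_n-C\|_\infty + E\bigl[\|B_m(C)-C\|_\infty\bigr],$$
and the first right-hand term converges a.s.\ to zero by the hypothesized uniform consistency of $C^{\#}_n$.

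It remains to show that the second term vanishes as $n\to\infty$. Because $C$ is continuous on the compact cube $[0,1]^d$, the multivariate Bernstein--Weierstrass approximation theorem gives $\|B_m(C)-C\|_\infty\to 0$ whenever $\min_j m_j\to\infty$, with a rate controlled by the modulus of continuity of $C$ on scale $1/\sqrt{\min_j m_j}$ (obtained by iterating the one-dimensional estimate coordinate by coordinate). Given $\epsilon>0$, choose $M=M(\epsilon,C)$ such that $\min_j m_j>M$ forces $\|B_m(C)-C\|_\infty<\epsilon$, and split, using the trivial bound $\|B_m(C)-C\|_\infty\le 2$,
$$E\bigl[\|B_m(C)-C\|_\infty\bigr] \;\le\; \epsilon + 2\,\Pr\bigl(\min_j m_j\le M\bigr).$$
Under the prior (\ref{prior1})--(\ref{prior2}), conditional on $\alpha_j\ge 1/3$ the variable $m_j-1$ is Poisson with mean at least $n^{1/3}\to\infty$; a Chebyshev bound gives $\Pr(m_j\le M\mid\alpha_j)\to 0$ uniformly in $\alpha_j\in(1/3,2/3)$, and integrating over $\alpha_j$ followed by a union bound in $j$ yields $\Pr(\min_j m_j\le M)\to 0$. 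Letting $n\to\infty$ and then $\epsilon\downarrow 0$ finishes the argument.

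The main obstacle I anticipate is the bookkeeping between the two randomness sources: the data (with respect to which the a.s.\ statement is made) and the prior draws of the tuning parameters $(m,\alpha)$ (which are integrated out by $E[\cdot]$). The Bernstein contraction step is what makes the argument clean, because it removes $m$ from the statistical error entirely, so the data-level a.s.\ limit and the prior-level integration decouple. The only other care points are (i) stating the multivariate approximation bound in a form that depends on $\min_j m_j$ rather than on some weighted combination of the $m_j$'s, which is immediate by applying the classical one-dimensional Bernstein error estimate one coordinate at a time, and (ii) ensuring the Poisson lower-tail bound is uniform in $\alpha_j\in(1/3,2/3)$, which follows from $\alpha_j\ge 1/3$.
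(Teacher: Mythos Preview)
Your proposal is correct and follows the same overall strategy as the paper, but with a cleaner decomposition. The paper splits into three terms via an intermediate comparison with the rank-based empirical copula $C_n$, namely $\|B_m(C_n^{\#})-B_m(C_n)\|+\|B_m(C_n)-B_m(C)\|+\|B_m(C)-C\|$, bounding the first by $d/n$ (from $\|C_n^{\#}-C_n\|\le d/n$), the second by $\|C_n-C\|=O(n^{-1/2}(\ln\ln n)^{1/2})$ a.s., and the third by the explicit bound $\sum_j \tfrac{1}{2\sqrt{m_j}}$; it then handles the prior expectation via a Poisson identity $E[(M+1)^{-1}]=(1-e^{-\lambda})/\lambda$ combined with Jensen's inequality to show $E[1/\sqrt{m_j}]\to 0$. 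Your two-term split goes directly from $C_n^{\#}$ to $C$, invoking the assumed uniform consistency of $C_n^{\#}$ itself rather than detouring through $C_n$, and replaces the quantitative Bernstein approximation bound by a soft $\epsilon$--$M$ argument plus a Poisson lower-tail estimate. Your route is more economical and needs fewer external references; the paper's route yields an explicit (non-sharp) rate of convergence as a by-product, which your argument does not deliver but which the theorem statement does not demand. One cosmetic point: the trivial bound can be taken as $1$ rather than $2$, since both $B_m(C)$ and $C$ are copulas with values in $[0,1]$.
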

\begin{proof}[\textbf{\upshape Proof:}]
We denote the ECBC as $B_m(C_n^{\#})$ for simplicity. Also, the empirical 
Bernstein copula and the Bernstein copula are denoted as $B_m(C_n)$ and $B_m(C)$, respectively. Let $\displaystyle ||g||=\sup_{\mathbf{u}\in [0, 1]^d}g(\mathbf{u})$ denote the supremum norm of a function $g(\cdot)$ defined on d-dimensional square $[0, 1]^d$. Using the familiar triangle inequality we have
\begin{equation*}
|| B_m(C_n^{\#}) - C|| \leq || B_m(C_n^{\#}) -B_m(C_n)|| + || B_m(C_n) -B_m(C)|| +  ||B_m(C) - C||
\end{equation*}
First, under the assumption that the marginal CDFs are continuous, it follows from the Remark 2 in \citet{genest2017asymptotic} that
\begin{equation*}
|| C_n^{\#} - C_n|| \leq \frac{d}{n}.
\end{equation*}
Next, notice that
\begin{equation*}
\begin{aligned}
& || B_m(C_n^{\#}) - B_m(C_n)|| \\
&=  \underset{\mathbf{u} \in [0,1]^d}{\sup} \left|\sum_{k_1 = 0}^{m_1} \dots \sum_{k_d = 0}^{m_d} \left(C^{\#}_n\left(\frac{k_1}{m_1}, \ldots, \frac{k_d}{m_d}\right) - C_n\left(\frac{k_1}{m_1}, \ldots, \frac{k_d}{m_d}\right)\right) 
 \prod_{j=1}^d {m_j \choose k_j } u_j^{k_j} (1 - u_j)^{m_j - k_j} \right|\\
& \leq  \underset{0 \leq k_1 \leq m_1, \ldots 0 \leq k_d \leq m_d}{\max} \left|C^{\#}_n\left(\frac{k_1}{m_1}, \ldots, \frac{k_d}{m_d}\right) - C_n\left(\frac{k_1}{m_1}, \ldots, \frac{k_d}{m_d}\right)\right|\\
& \leq || C_n^{\#} - C_n||\leq \frac{d}{n}
\end{aligned}
\end{equation*}
In above the second inequality follows from the fact that since ${m_j \choose k_j } u_j^{k_j} (1 - u_j)^{m_j - k_j}, k_j \in \{ 0,\ldots,m_j \}$ are binomial probabilities, $\sum_{k_j=0}^{m_j}{m_j \choose k_j } u_j^{k_j} (1 - u_j)^{m_j - k_j}=1$ for any $u_j\in [0, 1]$ and for any $j \in \{1, \dots, d \}$.

\noindent Next  by using the  Lemma 1 in \citet{janssen2012large} and equation (3) in \citet{kiriliouk2019some}, we obtain
\begin{equation*}
\begin{aligned}
|| C_n - C|| & \leq ||C_n - F_n (F^{-1}_{n1}(u_1), \ldots,F^{-1}_{nd}(u_d) ) || \\+
& ||F_n (F^{-1}_{n1}(u_1), \ldots,F^{-1}_{nd}(u_d) ) - F(F^{-1}_{1}(u_1), \ldots,F^{-1}_{d} d(u_d) || \\& \leq \frac{d}{n} + O(n^{-1/2}(\ln \ln n)^{1/2})\;\;\;a.s.\\
&= O(n^{-1/2}(\ln \ln n)^{1/2})\;\;\; a.s..
\end{aligned}
\end{equation*}
Hence, it now follows that
\begin{equation*}
|| B_m(C_n) -B_m(C)|| \leq || C_n - C|| \leq O(n^{-1/2}(\ln \ln n)^{1/2}),\;\;\; a.s.
\end{equation*}
Also, by using Lemma 3.2 in \citet{segers2017empirical} we have
\begin{equation*}
|| B_m(C) - C|| \leq \sum_{j=1}^d  \frac{1}{2 \sqrt{m_j}}
\end{equation*}

\noindent Thus, combining the above inequalities that are satisfied almost surely (a.s.) for every fixed $m_j$'s we obtain
\begin{equation*}
\begin{aligned}
|| B_m(C_n^{\#}) - C|| & \leq || B_m(C_n^{\#}) -B_m(C_n)|| + || B_m(C_n) -B_m(C)|| +  ||B_m(C) - C||\\
& \leq \frac{d}{n}+ \sum_{j=1}^d  \frac{1}{2 \sqrt{m_j}} + O(n^{-1/2}(\ln \ln n)^{1/2})\;\;\; a.s.\\
\end{aligned}
\end{equation*}
Next we consider the proposed empirical priors on the degrees $m_1, \ldots, m_d$ to be
\begin{equation*}
m_j |\alpha_j\stackrel{ind}{\sim} Poisson(n^{\alpha_j}) + 1\;\;\mbox{and}\;\;
\alpha_j \stackrel{iid}{\sim} Unif\Big(\frac{1}{3}, \frac{2}{3}\Big) \; \;\mbox{for}\; j \in \{1, \dots, d \}.
\end{equation*}
We make use of the following simple Lemma:
\begin{lemma}
Suppose $M\sim Poisson(\lambda)$, then $E[{1\over\sqrt{M+1}}]\leq \sqrt{{1-e^{-\lambda}\over\lambda}}$
\end{lemma}
\noindent {\em Proof of Lemma 1:} By Jensen's inequality for the square-root function, it follows that

\[E\left({1\over\sqrt{M+1}}\right)\leq \sqrt{E\left({1\over M+1}\right)}=\sqrt{{1\over\lambda} \sum_{m=0}^\infty  {\lambda^{m + 1} e^{-\lambda}\over (m + 1)!}}  = \sqrt{{1-e^{-\lambda} \over \lambda}}\].

\noindent Notice that as $\alpha_j\sim Unif(1/ 3, 2/ 3)$, $\Pr(\alpha_j>1/ 3)=1$ and conditioning on $\alpha_j$, we then have by the above Lemma, $E\left(\sqrt{1/m_j}\big|\alpha_j\right)\leq \sqrt{(1-e^{-n^{\alpha_j}})/ n^{\alpha_j}}\rightarrow 0$ as $n \rightarrow \infty$. Thus, taking expectation with respect to the prior distribution, it follows that 
\begin{equation*}
\begin{aligned}
E  || B_m(C_n^{\#}) - C||  & \leq E  || B_m(C_n^{\#}) -B_m(C_n)||  + E  || B_m(C_n) -B_m(C)||  +  E ||B_m(C) - C|| \\
& \leq \frac{d}{n}+ \sum_{j=1}^d \frac{1}{2} E\left(\frac{1}{ \sqrt{m_j}}\right) + O(n^{-1/2}(\ln \ln n)^{1/2}) \rightarrow 0 \; as \; n \rightarrow \infty\;\; a.s.
\end{aligned}
\end{equation*}
\end{proof}

Notice that in the above result the a.s. convergence is with respect to the empirical {\em marginal} distribution of the data integrating out the conditional empirical distribution of data (given the $m_j$'s) weighted by the empirical prior distribution of the tuning parameters $m_j$'s. This is not the usual notion of posterior consistency but rather the notion can be viewed as using integrated likelihood approach (\citet{berger1999integrated}) with respect to the empirical marginal distribution obtained by integrating the priors given by equations (\ref{prior1}) and (\ref{prior2}).

It is to be noted that the joint posterior distribution of ($m_1, \ldots m_d$) may not preserve necessarily an exchangeable structure as the above prior. Using the empirical Bayes hierarchical structure of the above-proposed model it can be shown that efficient MCMC methods can be utilized to draw approximate samples from the path of a geometrically ergodic Markov Chain with posterior distribution as its stationary distribution. By generating a sufficiently large number of MCMC samples we can estimate the marginal posterior mode of the discrete-valued parameter $m_j$'s as final estimates. Let $m_{j1}, \ldots, m_{jK}$ denote $K$ MCMC samples of $m_j, j \in \{1, \dots, d \}$ and for each $j$, let $\tilde{m}_{j1}<\tilde{m}_{j2}<\cdots<\tilde{m}_{jD_j}$ denote the distinct values among these MCMC samples. Then the (marginal) posterior mode of $m_j$ is estimated by 
\begin{equation*}
\hat{m_j} =  \underset{m_{jb}, b=1,\ldots, D_j}{\text{argmax}} \sum_{a=1}^K \mathbb{I}(m_{ja} = \tilde{m}_{jb}), \; \; j \in \{1, \dots, d \}.
\end{equation*}
The final estimate of the smooth copula based on the proposed ECBC is then given by
\begin{equation*}
C^{\#}_{\hat{m},n}(\mathbf{u}) = \sum_{k_1 = 0}^{\hat{m}_1} \dots \sum_{k_d = 0}^{\hat{m}_d} \tilde{\theta}_{k_1,\ldots,k_d} \prod_{j=1}^d {\hat{m}_j \choose k_j } u_j^{k_j} (1 - u_j)^{\hat{m}_j - k_j}
\end{equation*}
where 
\begin{equation*}
\tilde{\theta}_{k_1,\ldots,k_d} = C^{\#}_n\left(\frac{k_1}{\hat{m}_1}, \ldots, \frac{k_d}{\hat{m}_d}\right). 
\end{equation*}
It is to be noted that other posterior estimates (e.g., posterior mean when it exists or coordinate-wise posterior median or some version of multivariate posterior median) can also be used but for simplicity (and the requirement that these posterior estimates of $m_j$'s be necessarily integer-valued) we have chosen to use posterior mode based on the marginal posterior distributions of $m_j$'s. Through many numerical illustrations we show the easy applicability of this choice in various examples.

\subsection{Multivariate Dependence Estimation}
\label{sec:multi_dependence}
In higher dimensions, it is often of interest to evaluate the strength of dependence among variables. This is often done using copulas since most dependence measures can be expressed as a function of copulas.  Spearman's rank correlation coefficient (Spearman's rho) is one of the most widely used dependence measures.  For a bivariate copula $C$, Spearman's rho can be written as 
\begin{equation*}
\rho = 12 \int_{0}^{1} \int_{0}^{1} C(u,v) du dv - 3= 12 \int_{0}^{1} \int_{0}^{1} (C(u,v) - uv) du dv.
\end{equation*}
A multivariate extension of Spearman's rho given in \citet{nelsen1996nonparametric} takes the form
\begin{equation}
\rho_d = \frac{\int_{I^d}  C(\mathbf{u}) d\mathbf{u} - \int_{I^d} \Pi(\mathbf{u}) d\mathbf{u}}{\int_{I^d}  M(\mathbf{u}) d\mathbf{u} - \int_{I^d}  \Pi(\mathbf{u}) d\mathbf{u}} = \frac{d+1}{2^d - (d+1)}\Big(2^d \int_{I^d}  C(\mathbf{u}) d\mathbf{u} - 1  \Big).
\label{eq:spearman_rho}
\end{equation}

Compared to vine copulas that rely on pair copulas and complex tree structures, one of the advantages of our copula estimator is that it is straightforward to obtain an estimate of multivariate Spearman's rho as 
\begin{equation}
\label{eq:rho}
\hat{\rho}_d = \frac{d+1}{2^d - (d+1)}\left(2^d \sum_{k_1 = 0}^{\hat{m}_1} \dots \sum_{k_d = 0}^{\hat{m}_d} \tilde{\theta}_{k_1,\ldots,k_d} \prod_{j=1}^d {\hat{m}_j \choose k_j } B(k_j+1, \hat{m}_j-k_j+1) - 1  \right).
\end{equation}
where $B$ is the beta function.

It can be shown that the multivariate Spearman's rho is bounded by
\begin{equation*}
\frac{2^d - (d+1)!}{d! (2^d - d - 1)} \leq \rho_d \leq 1,
\end{equation*}
where the lower bound approaches to zero as dimension increases. Since our copula estimator is a genuine copula, the estimate of multivariate d-dimensional Spearman's rho $\hat{\rho}_d$ can avoid taking values out of the parameter space, which might be an issue for estimates built on other nonparametric copula estimators, e.g., the empirical copula (see \citet{perez2016note}). 

Similar to Spearman's rho, Kendall's tau is another common dependence measure and has its multivariate version as well, which is given by \citet{nelsen1996nonparametric} as
\begin{equation}
\tau_d = \frac{1}{2^{d-1} - 1} \left(2^d \int_{I^d}  C(\mathbf{u}) d C(\mathbf{u}) - 1  \right).
\label{eq:kendall_tau}
\end{equation}
By applying (\ref{eq:density1}) and (\ref{eq:density2}), it is also easy to obtain an estimate of multivariate Kendall's tau based on our copula estimator as 
\begin{equation*}
\begin{aligned}
\hat{\tau}_d = \frac{1}{2^{d-1} - 1}\Big(2^d \sum_{k_1 = 0}^{\hat{m}_1 -1} \dots \sum_{k_d = 0}^{\hat{m}_d -1} \sum_{l_1 = 0}^{\hat{m}_1} \dots \sum_{l_d = 0}^{\hat{m}_d} \tilde{w}_{k_1,\ldots,k_d} \tilde{\theta}_{l_1,\ldots,l_d} \\
\prod_{j=1}^d \hat{m}_{j}{\hat{m}_j -1 \choose k_j } {\hat{m}_j \choose l_j } B(k_j + l_j+1, 2\hat{m}_j-k_j-l_j) - 1  \Big).
\end{aligned}
\end{equation*}

Thus, using our proposed ECBC copula not only are we able to obtain a fully non-parametric estimate of any copula function in closed form (once the tuning parameters $m_j, j \in \{1, \dots, d \}$ are estimated by their posterior modes) but also we are able to derive the closed-form expression of estimates of the popular multivariate measures of dependence for any arbitrary dimension $d\geq 2$. 

Moreover, although we only illustrate the use of multivariate extensions of Kendall's tau and Spearman's rho as possible measures of multivariate dependence, any other multivariate notion of dependence measures that are suitable functionals of the underlying copula can also be computed using our closed-form expression of the ECBC estimator. This is particularly advantageous compared to even some of the flexible yet complicated parametric copula family (e.g., Archimedian, multivariate Gaussian or t, etc.)  for which it may require high-dimensional numerical integration to compute multivariate versions of Spearman's rho as given in (\ref{eq:spearman_rho}) and/or Kendall's tau given in (\ref{eq:kendall_tau}). For vine copulas, it is particularly challenging to obtain estimates of these multivariate measures of dependence as such high-dimensional integrals are often algebraically and even numerically intractable say for dimension $d\geq 5$, whereas for ECBC even when a new measure of dependence is created as a functional of the copula that may be more complicated than those defined in eq.s (\ref{eq:spearman_rho}) and (\ref{eq:kendall_tau}), we can easily obtain a large number of Monte Carlo (MC) samples from the ECBC and use MC based approximation to estimate such new measures of multivariate dependence (we illustrate such a case in our real case study involving portfolio risk optimization in Section \ref{sec:portfolio}).

\section{Numerical Illustrations using Simulated Data}
\label{sec:numerical}

\subsection{Finite-Sample Performance for Bivariate Cases}
\label{sec:finite}

We investigate the finite-sample performance of the ECBC through a Monte Carlo simulation study. Samples from the true copula are generated using the package $\tt copula$ in $\tt R$ (\citet{hofert2014package}). In order to visualize the results using contour plots, we first restrict our illustration to bivariate copulas. Three copula families with various parameters and an asymmetric copula are considered. 
The first four examples are the bivariate Frank copulas
\begin{equation*}
C_F(u,v) = - \frac{1}{\theta} \ln \Big(1 + \frac{(\text{exp}(- \theta u) - 1) (\text{exp}(- \theta v) - 1)}{\text{exp}(- \theta) - 1)} \Big),
\end{equation*}
with parameter $\theta$ equal to $-2, -1, 1$ and $2$, which reflects a wide range of dependence from negative to positive. 
The next two examples are the Clayton copula with parameter $1$
\begin{equation*}
C_C(u,v) = (\max\{ u^{-1}+ v^{-1} - 1, 0 \})^{- 1},
\end{equation*}
and the Gumbel copula with parameter $2$
\begin{equation*}
C_G(u,v) = \text{exp}(-((-\ln(u))^{2} + (-\ln(v))^{2})^{1/ 2}).
\end{equation*}
The value of Kendall's tau is 0.33 for Clayton copula with parameter $1$ and 0.5 for Gumbel copula with parameter $2$. Both cases have a moderate positive dependence. 

The next example is the independent copula $C(u, v) = uv$. Finally, we consider an asymmetric copula 
\begin{equation*}
C_a(u,v) = u v - 0.12 (1 - v^2) \text{sin}(8.3 v) u (1-u),
\end{equation*}

In the simulation study,  $n = 100$ samples are drawn from the true copula for each replicate (of size $n = 100$) and there are $N=100$ replicates. Degrees of the ECBC are estimated by
posterior modes by obtaining 5000 MCMC samples following 2000 burn-in samples of two chains for each of the $N$ replicated data sets generated from a chosen true copula model. It is to be noted that it takes about 2, 11, and 60 minutes to run 7000 iterations for two chains by using MacOS with 16GB of RAM for $d=2$, $d=10$, and $d=50$, respectively, when data are drawn from multivariate Frank copula. Convergence of MCMC runs was monitored based on preliminary runs using standard diagnostics available in \textsf{R} packages \texttt{rjags} and \texttt{coda}. We show the results for eight copulas in Fig. \ref{fig:estimation1} and \ref{fig:estimation2}. The contour plot of the true underlying copula and the empirical MC average of $N$ copula estimates are given for comparison and $(\bar{m}_1, \bar{m}_2)$ represents the MC mean of the posterior modes of degree parameters.

We can see from the contour plots that the average of the estimated copula is extremely close to the underlying true copula across all different dependence structures irrespective of the assumed parametric models. This illustrates that the proposed ECBC has a robust performance in estimating various true copulas.

\begin{figure}
    \centering
    \begin{subfigure}[b]{0.35\textwidth}
        \includegraphics[width=\textwidth]{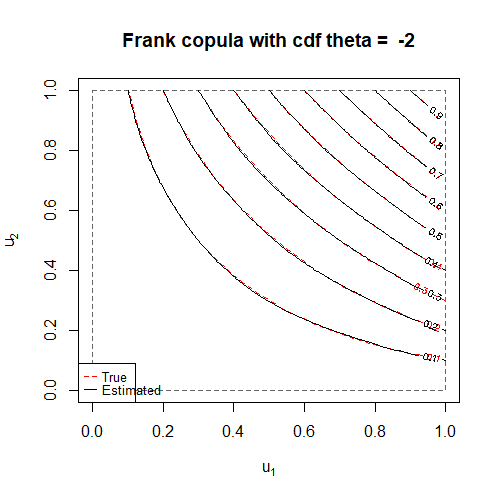}
        \caption{$(\bar{m}_1, \bar{m}_2)=(22.52, 24.01)$}
    \end{subfigure}
    \begin{subfigure}[b]{0.35\textwidth}
        \includegraphics[width=\textwidth]{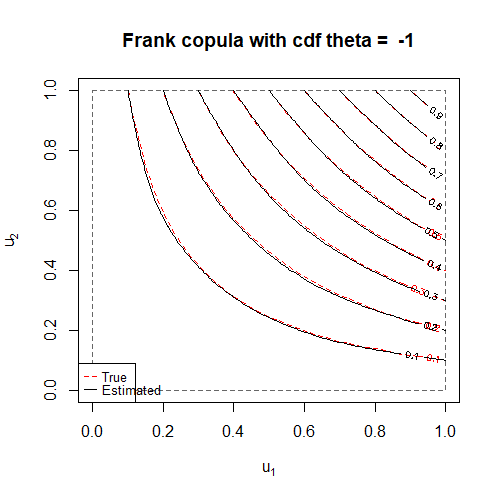}
        \caption{$(\bar{m}_1, \bar{m}_2)=(22.46, 23.17)$}
    \end{subfigure}
    \begin{subfigure}[b]{0.35\textwidth}
        \includegraphics[width=\textwidth]{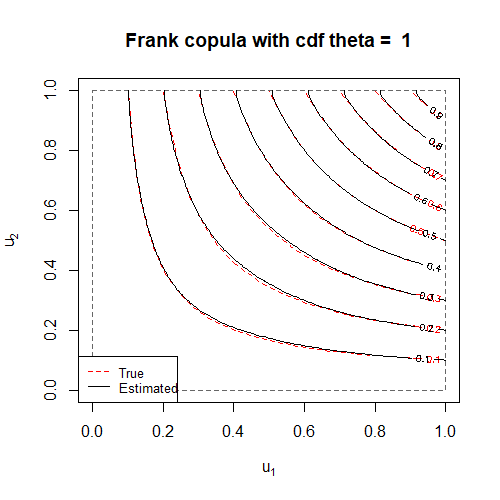}
        \caption{$(\bar{m}_1, \bar{m}_2)=(23.29, 24.83)$}
    \end{subfigure}
        \begin{subfigure}[b]{0.35\textwidth}
        \includegraphics[width=\textwidth]{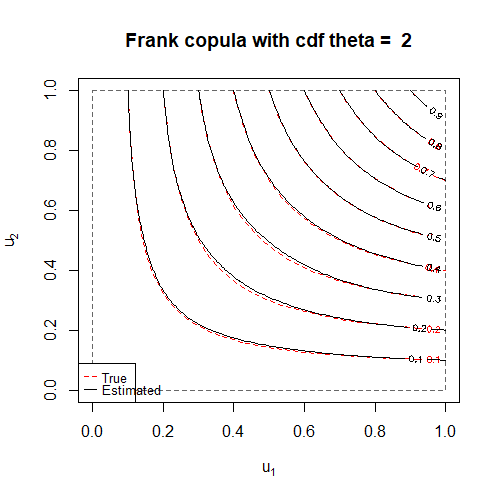}
        \caption{$(\bar{m}_1, \bar{m}_2)=(22.82, 22.17)$}
    \end{subfigure}
    \caption{Estimation of Frank copulas using the ECBC with empirical Bayesian method for choosing proper degrees when sample size $n=100$.}\label{fig:estimation1}
\end{figure}

\begin{figure}
    \centering
        \begin{subfigure}[b]{0.35\textwidth}
        \includegraphics[width=\textwidth]{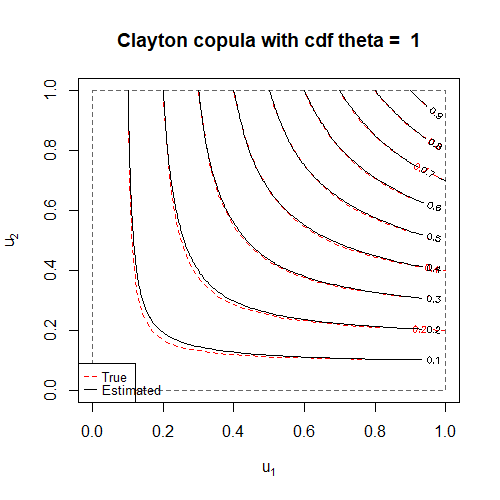}
        \caption{$(\bar{m}_1, \bar{m}_2)=(23.94, 21.69)$}
    \end{subfigure}
        \begin{subfigure}[b]{0.35\textwidth}
        \includegraphics[width=\textwidth]{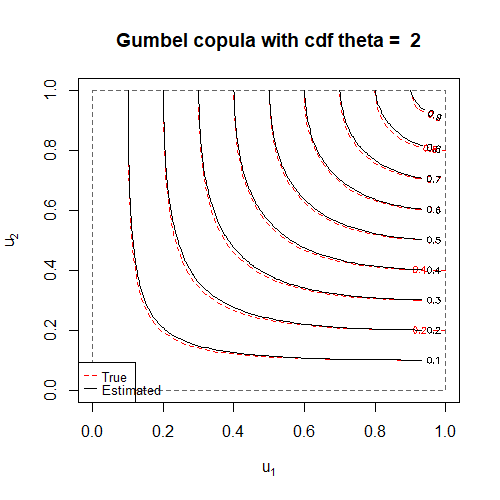}
        \caption{$(\bar{m}_1, \bar{m}_2)=(22.58, 22.16)$}
    \end{subfigure}
                \begin{subfigure}[b]{0.35\textwidth}
        \includegraphics[width=\textwidth]{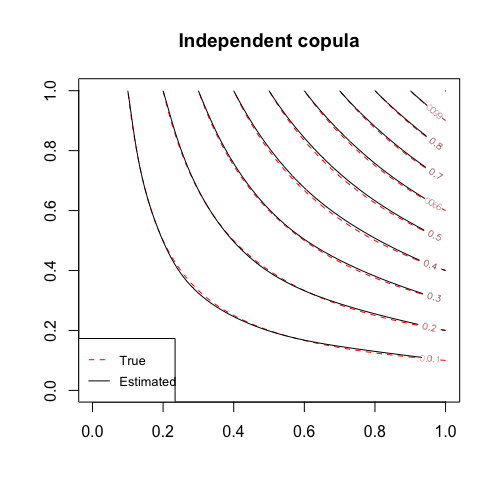}
        \caption{$(\bar{m}_1, \bar{m}_2)=(22.34, 20.87)$}
    \end{subfigure}
            \begin{subfigure}[b]{0.35\textwidth}
        \includegraphics[width=\textwidth]{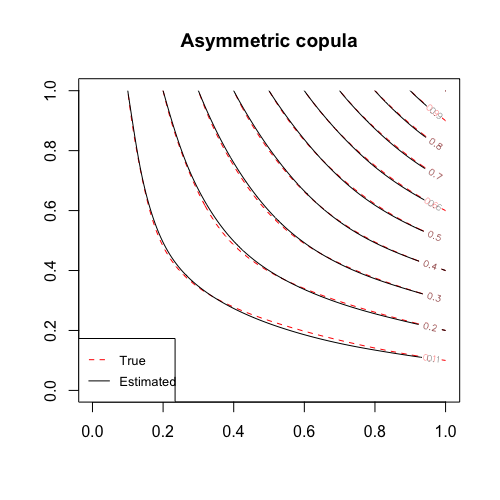}
        \caption{$(\bar{m}_1, \bar{m}_2)=(21.08, 22.92)$}
    \end{subfigure}
    
    \caption{Estimation of various copulas using the ECBC with empirical Bayesian method for choosing proper degrees when sample size $n=100$.}\label{fig:estimation2}
\end{figure}

\subsection{Accuracy of Multivariate Dependence Measures ($d=3$)}
\label{sec:accuracy}

To assess the finite sample performance of the estimate of multivariate Spearman's rho $\hat{\rho}_d$, we conduct Monte Carlo simulations for $d=3$ copulas. We consider the independent copula and Clayton copulas with parameter value $\{ 0.5, 1, 2 \}$, respectively. For each copula model, $N = 100$ Monte Carlo replicates are generated with size $n=100$. For each replicate, we compute the proposed estimator $\hat{\rho}_d$ in (\ref{eq:rho}) and the estimator based on empirical copula $\tilde{\rho}_d$ 
\begin{equation}
\begin{aligned}
\tilde{\rho}_d &= \frac{d+1}{2^d - (d+1)}\Big(2^d \int_{I^d} C_n(\mathbf{u}) d\mathbf{u} - 1  \Big) \\
&= \frac{d+1}{2^d - (d+1)}\Big( \frac{2^d}{n}\sum_{i=1}^n \prod_{j=1}^d (1 - U_{ij}) - 1\Big).
\end{aligned}
\end{equation}
Finally, for each estimator we compute the mean, bias, variance and mean square error (MSE) over all replicates.

Table \ref{tab:rho} shows the results on the estimation of multivariate Spearman's rho for four different copulas. An approximated value of the true multivariate Spearman's rho $\rho_d$ can be obtained by numerical integration since there is no analytical expression as a function of the parameter (see \citet{perez2016note}), which is also given in Table \ref{tab:rho}. The corresponding boxplots for the two estimates based on $N = 100$ replicates along with a horizontal line for true multivariate Spearman's rho $\rho_d$ are shown in Fig. \ref{fig:rho}. 

From the results we can see our estimator $\hat{\rho}_d$ outperforms $\tilde{\rho}_d$ with respect to variance and MSE. In terms of bias, $\hat{\rho}_d$ tends to underestimate and have a larger bias as strength of dependence increases, but there is not a clear superiority of one estimator over the other. As shown in Fig. \ref{fig:rho} (c) and (d) where there is a moderate or strong dependence in trivariate cases, $\tilde{\rho}_d$ can take values out of parameter space $[-2/3, 1]$ ($3\%$ and $12\%$ of $\tilde{\rho}_d$ are taking values greater than $1$ in (c) and (d), respectively), which can be problematic in measuring dependence.

\begin{table}[htbp]
\caption{Comparison of two estimates of multivariate Spearman's rho based on  $N = 100$  replications of size $n=100$ generated from  the independent copula and Clayton copulas with different parameter values when dimension $d = 3$.} 
\centering 
\begin{tabular}{ccccccc} 
\hline\hline 
Copula & $\rho_d$  & Estimator  & Mean & Bias & Variance &MSE \\ [0.5ex]
\hline \hline
Independent & 0 & $\hat{\rho}_d$ &0.007 & 0.007  &0.008  & 0.009 \\[0.5ex]
 & & $\tilde{\rho}_d$ & -0.015 &-0.015 &0.014 &0.015 \\[0.5ex]

\hline 
Clayton & 0.308 & $\hat{\rho}_d$  &0.294 & -0.014& 0.007& 0.008   \\[0.5ex]
$\theta = 0.5$ & & $\tilde{\rho}_d$  & 0.287& -0.021& 0.028& 0.029  \\[0.5ex]

\hline
Clayton &0.504 &$\hat{\rho}_d$ &0.477 & -0.027 & 0.007 & 0.008 \\[0.5ex]
$\theta = 1$ & & $\tilde{\rho}_d$ &0.519 & 0.015 & 0.039 & 0.040\\[0.5ex]

\hline
Clayton  & 0.717& $\hat{\rho}_d$ &0.680 & -0.037 & 0.004 & 0.006 \\[0.5ex]
$\theta = 2$& & $\tilde{\rho}_d$  & 0.732 &  0.015 & 0.050 & 0.051  \\[0.5ex]

\hline \hline 
\end{tabular}
\label{tab:rho}
\end{table}

\begin{figure}
    \centering
        \begin{subfigure}[b]{0.35\textwidth}
        \includegraphics[width=\textwidth]{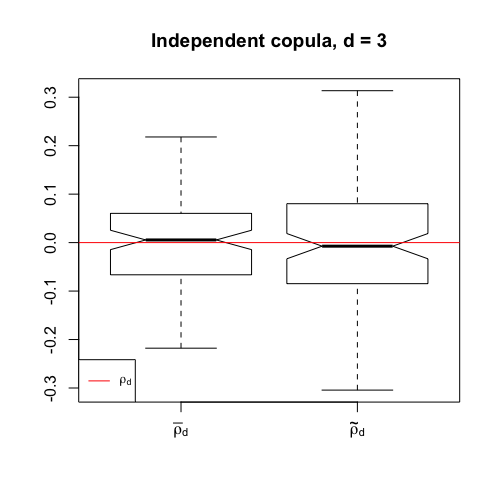}
        \caption{}
    \end{subfigure}
        \begin{subfigure}[b]{0.35\textwidth}
        \includegraphics[width=\textwidth]{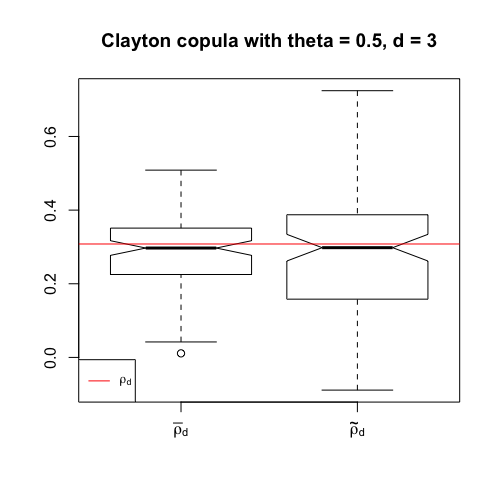}
        \caption{}
    \end{subfigure}
                \begin{subfigure}[b]{0.35\textwidth}
        \includegraphics[width=\textwidth]{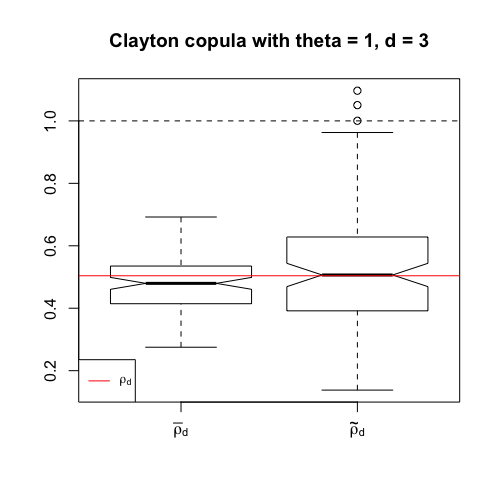}
        \caption{}
    \end{subfigure}
            \begin{subfigure}[b]{0.35\textwidth}
        \includegraphics[width=\textwidth]{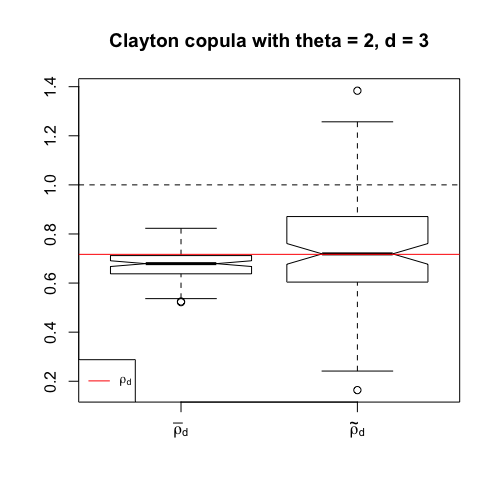}
        \caption{}
    \end{subfigure}
    
    \caption{Boxplots of the two estimators based on $N = 100$ replicates and a horizontal line of true multivariate Spearman's rho $\rho_d$ for each of four trivariate copulas.}\label{fig:rho}
\end{figure}

\subsection{Estimation of Tuning parameters of ECBC}
\label{sec:degree}

We now illustrate one of the primary advantages of our proposed empirical Bayes estimate of the ECBC that allows for data-dependent automatic selection of dimension-varying degree parameters $m_j$'s. We further explore the special case of choosing equal degrees $m_1 =  \ldots = m_d = m$ by using the following prior distribution
\begin{eqnarray}
\label{prior3}
m \mid\alpha \sim Poisson(n^{\alpha}) + 1,\\
\mbox{and}\;\;\alpha \sim Unif\Big(\frac{1}{3}, \frac{2}{3}\Big)
\label{prior4}
\end{eqnarray}
For our empirical illustration, we consider three true bivariate copulas and one trivariate copula to explore the comparative performance of choosing dimension-varying degrees compared to setting them equal across all dimensions. The first example is the Farlie-Gumbel-Morgenstern (FGM) copula $C_a(u,v) = u v(1 - a(1 - u)(1 - v))$ with parameter $a=-1$. The next two choices are the independent copula (e.g., FGM with $a=0$) and the Gaussian copula with positive dependence (correlation $\rho = 0.5$). The last one is the trivariate t-copula with degrees of freedom $4$ and pairwise dependence $\rho_{12} = -0.2, \rho_{13} = 0.5$ and $\rho_{23} = 0.4$. Again samples of size $n = 100$ are obtained for each four cases and repeated $N=100$ times for MC evaluation. For each sample, we choose the degrees of the ECBC by computing the posterior modes using our proposed empirical Bayesian method. 

Fig. \ref{fig:degree} presents the scatterplot of estimated values of $(m_1, m_2)$ or $(m_1, m_2, m_3)$ for each chosen true copula model. From the plots we can observe that for bivariate copulas, posterior estimates of $m_1$ and $m_2$ are significantly different in most cases without any prior restrictions of equality. In fact, posterior probability of choosing equal $m_1=m_2$ is only about $0.08, 0.08$ and $0.13$ for FGM copula, for independent copula, and for Gaussian copula, respectively indicating against forcing $m_1=m_2$ as is popularly done in the literature. For the trivariate t-copula case, the posterior probability of $m_1=m_2=m_3$ is 0, decisively suggesting that equality assumption is sub-optimal in general and particular as the dimension increases. We have conducted further studies with dimensions (not shown here for limitation of space) and the conclusions remain very similar. 

\begin{figure}
    \centering
        \begin{subfigure}[b]{0.35\textwidth}
        \includegraphics[width=\textwidth]{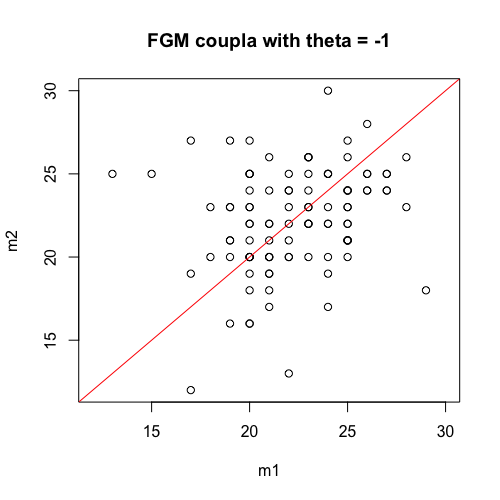}
        \caption{}
    \end{subfigure}
        \begin{subfigure}[b]{0.35\textwidth}
        \includegraphics[width=\textwidth]{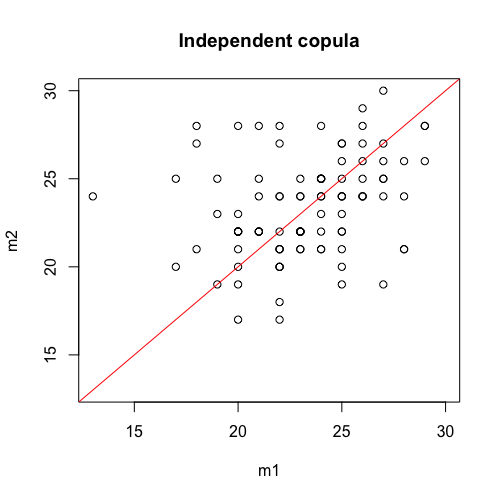}
        \caption{}
    \end{subfigure}
            \begin{subfigure}[b]{0.35\textwidth}
        \includegraphics[width=\textwidth]{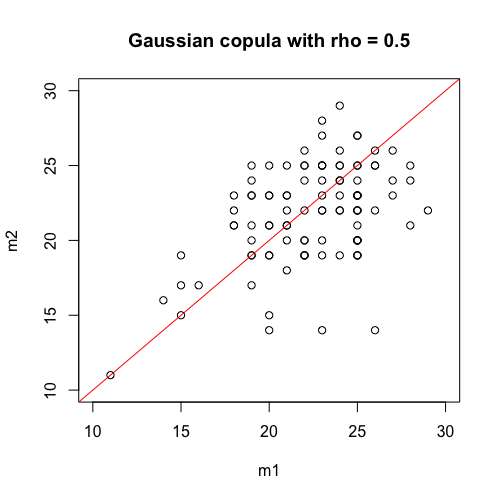}
        \caption{}
    \end{subfigure}
     \begin{subfigure}[b]{0.35\textwidth}
        \includegraphics[width=\textwidth]{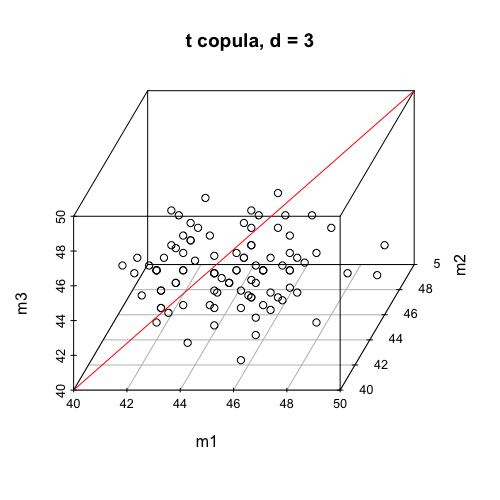}
        \caption{}
    \end{subfigure}
    \caption{Choice of dimension-varying degrees obtained by applying the proposed empirical Bayesian method based on  $N = 100$ replications when sample size $n=100$.}\label{fig:degree}
\end{figure}

In order to further compare the performances of copula estimators with flexible degrees vs. equal degrees, we fit our Bayesian models with two different settings of priors: (i) (flexible) the original priors given in (\ref{prior1}) and (\ref{prior2}) where degrees are allowed to vary in different dimensions; (ii) (equal) modified priors given in (\ref{prior3}) and (\ref{prior4}) where degrees are set to be equal; using the same data set generated from the three bivariate copulas. Following \citet{segers2017empirical}, we consider three global performance measures: the integrated squared bias, the integrated variance, and the integrated mean squared error. Given a copula estimator $\hat{C}_n$, the performance measures are defined as 
\begin{eqnarray*}
\mbox{integrated squared bias:}\;\; ISB &=& \int_{[0,1]^d} \big[E[\hat{C}_n (\mathbf{u}) - C(\mathbf{u})]\big]^2 d \mathbf{u},\\
\mbox{integrated variance:}\;\; IV &=& \int_{[0,1]^d} E\big[[\hat{C}_n (\mathbf{u}) - E(\hat{C}_n (\mathbf{u}))]^2\big]d \mathbf{u},\\
\mbox{integrated mean squared error:}\;\; IMSE &=& \int_{[0,1]^d} E\big[[\hat{C}_n (\mathbf{u}) - C(\mathbf{u})]^2\big]d \mathbf{u}.
\end{eqnarray*}
We compute the performance measures by applying the computation method described in \citet{segers2017empirical}, which relies on Monte Carlo simulation to get a Monte Carlo estimate of each performance measure. Table \ref{tab:degrees} presents the results for the four cases, where the first three are bivariate ($d=2$) and the last one is trivariate ($d=3$). The standard errors of the Monte Carlo estimates are not reported in the table as they are negligibly small. We can see that the copula estimators with flexible degrees perform better than those with equal degrees in terms of IV and IMSE in all cases. As the difference in IMSE is dominated by the IV term, the choice of flexible degrees leads to smaller uncertainty, and hence smaller IMSE while the biases remain relatively unaffected. It is also interesting to observe that estimated degrees are far smaller than sample sizes indicating the empirical beta copula may not have optimal performance, which we next explore.

\begin{table}[htbp]
\caption{Comparison of copula estimators with flexible degrees vs.  equal degrees using three performance measures computed by Monte Carlo simulation based on  $N = 100$ replications when sample size $n = 100$.} 
\centering 
\begin{tabular}{ccccc} 
\hline\hline 
Copula & Choice of degrees  & ISB ($\times 10^{-4}$) & IV ($\times 10^{-4}$) & IMSE ($\times 10^{-4}$)\\ [0.5ex]
\hline \hline
FGM & Flexible   &$0.10 $ & $1.28 $ &  $1.38 $ \\[0.5ex]
$\theta = -1$ &Equal  & $0.09$ & $1.37$ & $1.46 $ \\[0.5ex]

\hline 
Independent & Flexible   &$0.13$ & $1.89 $ &  $2.02 $\\[0.5ex]
 & Equal  & $0.20$ & $2.16 $ & $2.36 $\\[0.5ex]

\hline
Gaussian & Flexible  &$0.36$ & $0.71 $ & $1.07 $\\[0.5ex]
$\rho = 0.5$ & Equal  & $0.30$ & $0.83 $ & $1.13 $ \\[0.5ex]

\hline
t  & Flexible  &$0.11 $ & $2.67 $ & $2.78$\\[0.5ex]
$d = 3$ & Equal  &$0.19 $ & $2.79$ & $2.98 $ \\[0.5ex]

\hline \hline 
\end{tabular}
\label{tab:degrees}
\end{table}

\subsection{Comparison with the Empirical Bernstein Copulas} 
\label{sec:comparison}
In this section, we compare the finite-sample performance of the ECBC with other nonparametric copula estimators only as it has been already demonstrated that parametric models based methods lead to biased estimates under model misspecification. First, we consider the empirical beta copula introduced by \citet{segers2017empirical}, which is a special case of the empirical Bernstein copula where the degrees of the polynomials are set equal to the sample size. The empirical beta copula is a genuine copula and has been shown to outperform the classical empirical copula and the empirical checkerboard copula in terms of bias and variance. For bivariate cases, we also include the empirical Bernstein copula with degrees as suggested in \citet{janssen2012large} into the comparison. By setting degrees $m_1=m_2=m$ and minimizing the asymptotic pointwise mean squared error with respect to $m$, \citet{janssen2012large} suggested the choice of $m$ in the bivariate case as 
\begin{equation}
m_0(u_1, u_2) = \Big( \frac{4b^2(u_1, u_2)}{V(u_1, u_2)}\Big )^{2/3} n^{2/3}
\label{eq:opt_degree}
\end{equation}
where 
\begin{eqnarray*}
b^2(u_1, u_2) &=& \frac{1}{2}  \sum_{j=1}^2 u_j(1-u_j)C_{u_j u_j}(u_1, u_2)\\
\mbox{and}\;\;V(u_1, u_2) &=& \sum_{j=1}^2 C_{u_j}(u_1, u_2) (1-C_{u_j}(u_1, u_2)) \Big( \frac{u_j (1-u_j)}{\pi} \Big)^{1/2}
\end{eqnarray*}
and $C_{u_j}$ and $C_{u_j u_j}$ are the first-order and second-order partial derivatives of C, respectively, with respect to $u_j, j = 1, 2$. Note that even we use the integer part $\lfloor m_0(u_1, u_2)\rfloor$ in practice, it is not necessarily a divisor of $n$, meaning the empirical Bernstein copula with $m = \lfloor m_0(u_1, u_2)\rfloor$ is not guaranteed to be a genuine copula.

We consider the same copula models as in Section \ref{sec:degree}. The choice of degrees $m_0(u_1, u_2)$ suggested by \citet{janssen2012large} is not defined for the independent copula as $C_{u_j u_j} = 0$ and it is restricted to bivariate cases, so we only take the empirical Bernstein copula into consideration for the bivariate FGM and Gaussian copulas. All results in Table \ref{tab:comparison} are based on $N= 100$ MC replications each of sample sizes $n = 25, 50, 100$. We compare the performance of the ECBC with flexible degrees (referred to as flexible ECBC), the empirical beta copula (referred to as Beta), and the empirical Bernstein copula with $m = m_0(u_1, u_2)$ (referred to as Bernstein) using the same performance measures as in Section \ref{sec:degree}.

Table \ref{tab:comparison} indicates that the ECBC with flexible degrees outperforms the empirical beta copula in terms of variance and mean square error in all cases. Compared to the empirical Bernstein copula with $m = m_0(u_1, u_2)$, the ECBC with flexible degrees has a smaller bias but the ordering with respect to mean square error is not clear between these two copula estimators. For small samples, the empirical beta copula seems to have the largest variance while the bias of the empirical Bernstein copula with $m = m_0(u_1, u_2)$ is shown to be the largest even though it uses optimal "true" degree given in (\ref{eq:opt_degree}).

\begin{table}[htbp]
\caption{Comparison of the ECBC with flexible degrees (referred to as flexible ECBC), the empirical beta copula(referred to as Beta) and the empirical Bernstein copula with $m = m_0(u_1, u_2)$ (referred to as Bernstein) using three performance measures computed by Monte Carlo simulation based on $N= 100$ replications for sample size $n= 25, 50, 100$.} 
\centering 
\begin{tabular}{ccccccccccc} 
\hline\hline 
 &  & \multicolumn{3}{c}{ISB ($\times 10^{-4}$)} &  \multicolumn{3}{c}{IV ($\times 10^{-4}$)} &  \multicolumn{3}{c}{IMSE ($\times 10^{-4}$)}\\ [0.5ex]
Copula & Estimator & \multicolumn{3}{c}{ n =} & \multicolumn{3}{c}{n =} & \multicolumn{3}{c}{n =}\\
& & 25 & 50 & 100 & 25 & 50 & 100 & 25 & 50 & 100 \\
\hline \hline
FGM & flexible ECBC  & 0.29 & 0.07& 0.10 & 3.46& 2.72& 1.28 &3.75 & 2.79 & 1.38\\[0.5ex]
$\theta = -1$ &Beta  & 0.30 & 0.09 & 0.02 & 5.24 & 3.27 & 2.17 & 5.54&3.36 & 2.19 \\[0.5ex]
&Bernstein  & 0.83 & 0.51 & 0.07 & 1.52 & 1.33 & 1.08 & 2.35 & 1.84 &  1.15 \\[0.5ex]
\hline 
Independent & flexible ECBC  &0.68 & 0.31 & 0.13&2.37 &2.33 &1.89 & 3.05& 2.64 &2.02\\[0.5ex]
 & Beta  & 0.02 & 0.03 & 0.01 & 5.11 &4.02 & 2.25 & 5.13 & 4.05 &2.26\\[0.5ex]
&Bernstein  & NA & NA & NA & NA & NA & NA & NA & NA & NA \\[0.5ex]
\hline
Gaussian & flexible ECBC  &1.72 & 0.79 & 0.36 & 2.51& 1.67 &0.71 & 4.23& 2.46 & 1.07\\[0.5ex]
$\rho = 0.5$ & Beta  & 0.26 & 0.16& 0.20& 4.78 & 3.03 &1.81 & 5.04 & 3.19 & 2.01\\[0.5ex]
&Bernstein  & 6.82 & 3.09& 1.27 & 2.28 & 1.42& 1.06& 9.10 &4.51 & 2.33 \\[0.5ex]
\hline
t  & flexible ECBC & 0.41  & 0.30 & 0.11&  6.76&  4.26 &2.67 & 7.17& 4.56 & 2.78 \\[0.5ex]
$d = 3$ & Beta  & 0.27 & 0.15&0.04 & 7.55 & 5.05 & 3.07 & 7.82 & 5.20 & 3.11\\[0.5ex]
&Bernstein  & NA & NA & NA & NA & NA & NA & NA & NA & NA  \\[0.5ex]
\hline \hline 
\end{tabular}
\label{tab:comparison}
\end{table}

\section{Application to Portfolio Risk Management}
\label{sec:portfolio}

Copulas have been widely used in portfolio optimization and risk measurement as they are powerful tools to model the dependence among different assets in a portfolio. The proposed ECBC is capable of estimating multivariate copula and it is straightforward to sample from the estimated copula, so it can be applied to find optimal weights and estimate risk measures for a portfolio with a variety of assets.

We now illustrate the use of ECBC for portfolio risk allocation using real data consisting of a $d$ asset values. Value at risk (VaR) and conditional value at risk (CVaR) (also called expected shortfall (ES)) are common measures of risk in the field of risk management (see, e.g., \citet{jorion2007value} and \citet{uryasev2000conditional}). Assume that $X$ is the return of a portfolio or asset (daily log-return of a portfolio of stocks or individual stocks, with positive indicating profit and negative values representing loss) with distribution function $F_X(\cdot)=\Pr[X\leq x]$. The VaR of $X$ at the level of $\alpha\in (0, 1)$ is defined as
\begin{equation*}
\begin{aligned}
VaR_{\alpha}(X) = - \inf\{x \in \mathbb{R}: F_X(x) > \alpha\},
\end{aligned}
\end{equation*}
while the CVaR (ES) of $X$ is defined as 
\begin{equation*}
\begin{aligned}
CVaR_{\alpha}(X) = E(- X | X \leq - VaR_{\alpha}(X)).
\end{aligned}
\end{equation*}
Notice that, if we consider the corresponding loss of the same portfolio represented by $Y = -X$, then we have $VaR_{\alpha}(X) =VaR_{\alpha}(Y) = F_Y^{-1}(1 - \alpha)$ and $CVaR_{\alpha}(X) = CVaR_{\alpha}(Y) = E(Y | Y \geq VaR_{\alpha}(Y))$.

Mean-CVaR portfolio optimization is a popular portfolio optimization technique introduced by \citet{rockafellar2000optimization}. The advantage of Mean-CVaR portfolio optimization is that it calculates VaR and minimizes CVaR simultaneously where the optimization can be formulated as a linear programming problem.

Let $\mathbf{x} \in \mathbb{R}^d$ denote a realized return value of $d$ assets in a portfolio, and $\mathbf{v} \in {\cal S}_d=\{\mathbf{v}\in\mathbb{R}^d: v_j\geq 0,\;\forall j,\; \sum_{j=1}^{d} v_j =1\}$, denote the portfolio weights to be determined within the d-dimensional simplex ${\cal S}_d$.  The key to the approach in \citet{rockafellar2000optimization} is the auxiliary function for CVaR taking the form of  
\begin{equation}
\begin{aligned}
H_{\alpha}(\mathbf{v}, \gamma) = \gamma + \frac{1}{\alpha} \int_{l(\mathbf{v}, \mathbf{x}) \geq \gamma} (l(\mathbf{v}, \mathbf{x}) - \gamma) dF(\mathbf{x}),
\end{aligned}
\label{eq:cvar_int}
\end{equation}
where $l(\mathbf{v}, \mathbf{x}) = - \mathbf{v}^T \mathbf{x}$ is a linear loss function and $F(\mathbf{x})$ is the joint distribution function of daily (random) return vector $\mathbf{X}$ which we will estimate using our proposed ECBC based empirical Bayes method. It has been shown in Theorem 1 of \citet{rockafellar2000optimization} that for any weights $\mathbf{v}$, $H_{\alpha}(\mathbf{v}, \gamma)$ is convex as a function of $\gamma$ and is equal to $CVaR_\alpha(\mathbf{v})$ at the minimum point. Moreover, $VaR_\alpha(\mathbf{v})$ would be the left endpoint of $\underset{\gamma}{\arg\min} H_{\alpha}(\mathbf{v}, \gamma)$. Moreover, Minimizing $CVaR_\alpha(\mathbf{v})$ with respect to $\mathbf{v}$ is equivalent to minimizing $H_{\alpha}(\mathbf{v}, \gamma)$ with respect to $(\mathbf{v}, \gamma)$ (e.g., see Theorem 2 of \citet{rockafellar2000optimization} for details). To numerically approximate the integral in (\ref{eq:cvar_int}), it is often good enough to generate $M$ samples from $F(\cdot)$ or its estimate, which can be done by using the proposed empirical Bayes method based on the ECBC. However, a relatively less answered question in finance is how large should we choose $M$ for accurate estimation as the integral in (\ref{eq:cvar_int}) depends on sampling the tail part of $F(\cdot)$ or its estimate.
The empirical estimate of $H_{\alpha}(\mathbf{v}, \gamma)$ based on generating $\mathbf{x}_k\stackrel{iid}{\sim} F$ or $\hat{F}$ can be written as 
\begin{equation}
\begin{aligned}
F_{\alpha}(\mathbf{v}, \gamma) = \gamma + \frac{1}{\alpha M} \sum_{k=1}^M (- \mathbf{v}^T \mathbf{x}_k - \gamma)_+
\end{aligned}
\label{eq:cvar_app}
\end{equation}
where $(\cdot)_+ = \max(\cdot, 0)$. 
\begin{proposition}
In order to achieve an accuracy of $\epsilon > 0$ for the MC approximation, it is sufficient to generate $M$ MC samples such that 
\begin{equation}
\begin{aligned}
\sqrt{\frac{2 \ln \ln M}{M} \lambda_{max}} \leq \epsilon, \; \; i.e., \; \; \frac{M}{\ln \ln M} \geq \frac{2 \lambda_{max}}{\epsilon^2}
\end{aligned}
\label{eq:lil_bound}
\end{equation}
where $\Sigma = Var_F(\mathbf{X})$ and $\lambda_{max}$ is the largest eigenvalue of $\Sigma$. 
\end{proposition}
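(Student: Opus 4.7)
The plan is to recognize the Monte Carlo approximation error $F_\alpha(\mathbf{v}, \gamma) - H_\alpha(\mathbf{v}, \gamma)$ as the deviation of an empirical average from its expectation, then to control it by the law of the iterated logarithm (LIL). Specifically, writing $W_k = (-\mathbf{v}^{T} \mathbf{x}_k - \gamma)_+$ for $\mathbf{x}_k \stackrel{iid}{\sim} F$, we have
\begin{equation*}
F_\alpha(\mathbf{v}, \gamma) - H_\alpha(\mathbf{v}, \gamma) \;=\; \frac{1}{\alpha}\!\left(\frac{1}{M}\sum_{k=1}^{M} W_k - E[W_1]\right),
\end{equation*}
so the problem reduces to bounding the fluctuations of a sample mean of i.i.d.\ random variables in terms of $\mathrm{Var}(W_1)$.

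Next, I would control the variance of $W_1$ uniformly over admissible $(\mathbf{v},\gamma)$. Because the positive-part map $x \mapsto x_+$ is $1$-Lipschitz, $\mathrm{Var}(W_1) \leq \mathrm{Var}(-\mathbf{v}^{T}\mathbf{X} - \gamma) = \mathbf{v}^{T} \Sigma \mathbf{v}$. Using the Rayleigh-quotient bound $\mathbf{v}^{T}\Sigma\mathbf{v} \leq \lambda_{\max}\|\mathbf{v}\|_2^{2}$ together with the simplex constraint $\mathbf{v}\in\mathcal{S}_d$, which yields $\|\mathbf{v}\|_2^{2}\leq(\sum_{j} v_j)^{2}=1$, one obtains the uniform bound $\mathrm{Var}(W_1) \leq \lambda_{\max}$. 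This is the step that ties the required sample size to the spectral norm of $\Sigma$.

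Applying the LIL (Hartman-Wintner) to the centred sum $\sum_{k=1}^M (W_k - E[W_1])$, we get, almost surely,
\begin{equation*}
\limsup_{M\to\infty} \frac{\big|\sum_{k=1}^{M}(W_k - E[W_1])\big|}{\sqrt{2 M\,\mathrm{Var}(W_1) \ln\ln M}} \;=\; 1,
\end{equation*}
so for all sufficiently large $M$, $|F_\alpha(\mathbf{v},\gamma) - H_\alpha(\mathbf{v},\gamma)| \leq \sqrt{(2\ln\ln M)/M \cdot \lambda_{\max}}$ (up to the absorbed $1/\alpha$ factor, which can be folded into $\epsilon$). Requiring this bound to be at most $\epsilon$ and rearranging gives the stated inequality $M/\ln\ln M \geq 2\lambda_{\max}/\epsilon^{2}$. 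The main obstacle I foresee is the asymptotic nature of LIL: the inequality holds only eventually, so some care is needed either to invoke a non-asymptotic Bernstein- or Hoeffding-type concentration bound in its place, or to state the conclusion as holding for $M$ sufficiently large almost surely, which is the weakest interpretation consistent with the proposition.
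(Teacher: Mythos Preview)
Your proposal is correct and follows essentially the same route as the paper: apply the law of the iterated logarithm to the centred i.i.d.\ sum, bound $\mathrm{Var}((-\mathbf{v}^{T}\mathbf{X}-\gamma)_+)\leq \mathbf{v}^{T}\Sigma\mathbf{v}\leq \lambda_{\max}\|\mathbf{v}\|_2^2\leq \lambda_{\max}$ via the simplex constraint, and rearrange. Your write-up is in fact more careful than the paper's in two respects: you justify the variance inequality via the $1$-Lipschitz property of $x\mapsto x_+$ (the paper simply asserts it), and you flag both the suppressed $1/\alpha$ factor and the asymptotic-only nature of the LIL bound, neither of which the paper addresses.
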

\begin{proof}[\textbf{\upshape Proof:}]
By the Law of the Iterated Logarithm (see, e.g., \citet{balsubramani2014sharp}), the deviation of MC approximation from the mean is almost surely bounded by 
\begin{equation*}
\begin{aligned}
\sqrt{\frac{2 \ln \ln M}{M}} \sqrt{Var((- \mathbf{v}^T \mathbf{x}_k - \gamma)_+)}
\end{aligned}
\end{equation*}
Let $\Sigma = Var_F(\mathbf{X})$ and $\lambda_{max}$ be the largest eigenvalue of $\Sigma$, then we have 
\begin{equation*}
\begin{aligned}
Var((- \mathbf{v}^T \mathbf{X} - \gamma)_+) \leq Var( \mathbf{v}^T \mathbf{X}) = \mathbf{v}^T \Sigma \mathbf{v} \leq \lambda_{max}  \mathbf{v}^T  \mathbf{v} \leq \lambda_{max},
\end{aligned}
\end{equation*}
for any $\mathbf{v} \in {\cal S}_d$ because $\mathbf{v}^T \mathbf{v} \leq \mathbf{v}^T \mathbf{1} = 1$. Thus, for an accuracy of $\epsilon > 0$ for the MC approximation, it is sufficient to generate $M$ MC samples such that 
\begin{equation*}
\begin{aligned}
\sqrt{\frac{2 \ln \ln M}{M} \lambda_{max}} \leq \epsilon, \; \; i.e., \; \; \frac{M}{\ln \ln M} \geq \frac{2 \lambda_{max}}{\epsilon^2}
\end{aligned}
\end{equation*}
\end{proof}
Notice that $\Sigma$ and hence $\lambda_{max}$ can be easily estimated from the observed return values without any modeling assumption as long as $n>d$ , however sparse methods are necessary for large size portfolios when $n\leq d$. Next it can be shown that minimizing (\ref{eq:cvar_app}) is equivalent to minimizing
\begin{equation}
\begin{aligned}
F_{\alpha}(\mathbf{v}, \gamma) = \gamma + \frac{1}{\alpha M} \sum_{k=1}^M z_k \; \; s.t. \; \; z_k \geq 0,  z_k + \mathbf{v}^T \mathbf{x}_k + \gamma \geq 0
\end{aligned}
\label{eq:cvar_app2}
\end{equation}
Thus, along with the linear constraints on the weights $\mathbf{v}$, it can be formulated as a linear programming problem and can be solved using standard convex optimization methods. Conveniently, $\tt R$ function $\tt BDportfolio\_optim$ within the package $\tt PortfolioOptim$ can be used for this purpose.
Following the algorithm in \citet{semenov2017portfolio}, simulated return values can be obtained using estimated ECBC for portfolio optimization. The complete algorithm is summarized below:

\textbf{Step 1.} Transform assets' historical data $X_{tj}$ to pseudo-observations $U_{tj}$ and estimate copula using our proposed method.
\begin{equation*}
U_{tj} = F_{Tj}^{*}(X_{tj}), \; \; t=1,\ldots, T, j \in \{1, \dots, d \},
\end{equation*}
\begin{equation*}
F_{Tj}^{*}(x_j) =  \frac{1}{T+1} \sum_{t=1}^{T} I(X_{tj} \leq x_j), \; \; j \in \{1, \dots, d \}.
\end{equation*}

\textbf{Step 2.} Generate a sample of pseudo-observations $(U^*_{k1}, \ldots, U^*_{kd}), k = 1, \ldots, M$ from the estimated ECBC using empirical Bayes method and transform simulated pseudo-observations to univariate quantiles.
\begin{equation*}
X^*_{kj} = F_{Tj}^{* -1}(U^*_{kj}), \; \;k = 1, \ldots, M, j \in \{1, \dots, d \}.
\end{equation*}

\textbf{Step 3.} Calculate optimal weights $v_j, j \in \{1, \dots, d \}$ using simulated data $(X^*_{k1}, \ldots, X^*_{kd})$, $k = 1, \ldots, M$, and the corresponding VaR and CVaR, which are by-products of the portfolio optimization, by solving the linear programming problem given in (\ref{eq:cvar_app2}).

Our copula estimator is useful to find optimal weights and estimate risk measures as sampling from the estimated copula is straightforward. Considering the Bernstein copula density given in (\ref{eq:density1}) and (\ref{eq:density2}), we can obtain samples $(U_1, \ldots U_d) \sim {C}_{m}$ as follows
\begin{equation*}
\begin{aligned}
(k_1, \ldots, k_d) &\sim \tilde{w}_{k_1,\ldots, k_d}, \; \; k_j \in \{0,\ldots,m_j-1 \}, j \in \{1, \dots, d \}\\
U_j &\sim \text{Beta}(k_j + 1, m_j - k_j) , \; \; j \in \{1, \dots, d \}\\
\end{aligned}
\end{equation*}

As an example with moderately large dimension, we investigate the time series of daily closing stock prices of 10 top Nasdaq companies: AMZN, FB, GOOGL, AAPL, MSFT, INTC, CSCO, NFLX, CMCSA, and ADBE for the time period from January 1, 2018, to December 31, 2019. This data set consists of 502 observations and can be obtained using $\tt R$ package $\tt quantmod$. 

Suppose we want to find an optimal portfolio of stocks above that minimizes the expected shortfall of the portfolio. First, we convert the price series $P_{tj}$ to log-returns $X_{tj}$
\begin{equation*}
\begin{aligned}
X_{tj} = \ln \frac{P_{tj}}{P_{(t-1)j}}, \; \; t=1,\ldots, T, j \in \{1, \dots, d \},
\end{aligned}
\end{equation*}
resulting in $T = 501$ log-return values for $d=10$ assets. Then we follow Steps 1-3 as above to obtain the optimal portfolio weights and the corresponding VaR and CVaR. Similar to \citet{semenov2017portfolio}, we set the minimum weight to be limited by $v_j\geq 0.01, j \in \{1, \dots, d \}$ to avoid corner portfolio cases.

In Step 1, the posterior mode estimator of the degrees of the proposed ECBC are $(209, 206, 208, 206, 208, 209, 211,$ $210, 209,$ $208)$. The largest eigenvalue of the covariance matrix is $\lambda_{max} \approx 2 \times 10^{-3}$, so we are able to find the value of $M$ that is sufficient for a given accuracy $\epsilon$ from the relationship in (\ref{eq:lil_bound}).

We set $M = 10000$ (adequate for an accuracy $\epsilon \approx 9 \times 10^{-4}$) and repeat Step 2-3 $N = 100$ times to quantify estimation uncertainty. For each replicate we conduct portfolio optimization at the level of $\alpha\in \{0.10, 0.05, 0.01\}$ as popularly used. As a result, we are able to obtain the distribution of optimal weights (Fig. \ref{fig:10dist}) and risk measures (\ref{fig:10dist2}) using simulated data from the estimated copula.

From the boxplots of optimal weights in Fig. \ref{fig:10dist} we can see that CMCSA has a much higher weight than the other stocks in the mean-CVaR optimal portfolio across different levels. Also, by applying mean-CVaR portfolio optimization to the historical log-return data $X_{tj}$, we can get estimates of optimal weights and risk measures as well. In Fig. \ref{fig:10dist2}, the dashed lines indicate empirical estimates of risk measures using historical data.

\begin{figure}
    \centering
        \begin{subfigure}[b]{0.7\textwidth}
        \includegraphics[width=\textwidth]{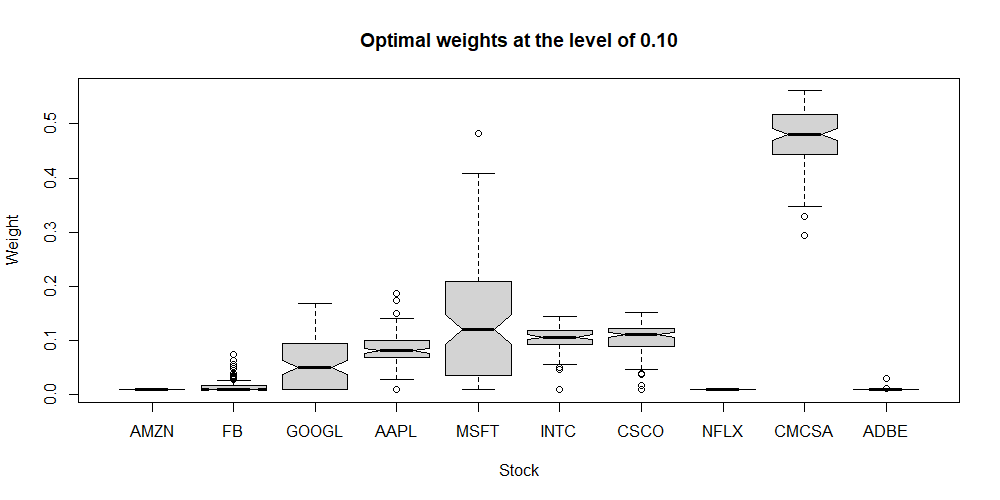}

    \end{subfigure}
             \begin{subfigure}[b]{0.7\textwidth}
        \includegraphics[width=\textwidth]{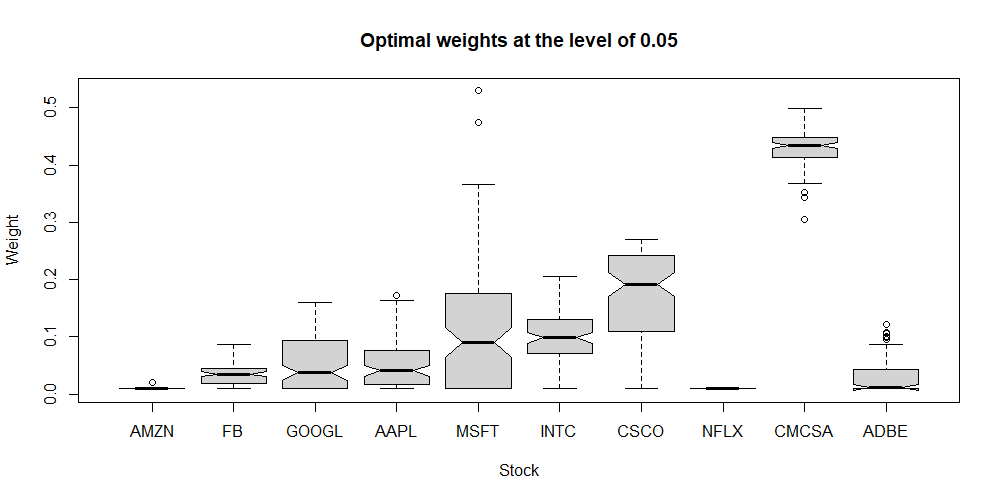}
 
    \end{subfigure}
             \begin{subfigure}[b]{0.7\textwidth}
        \includegraphics[width=\textwidth]{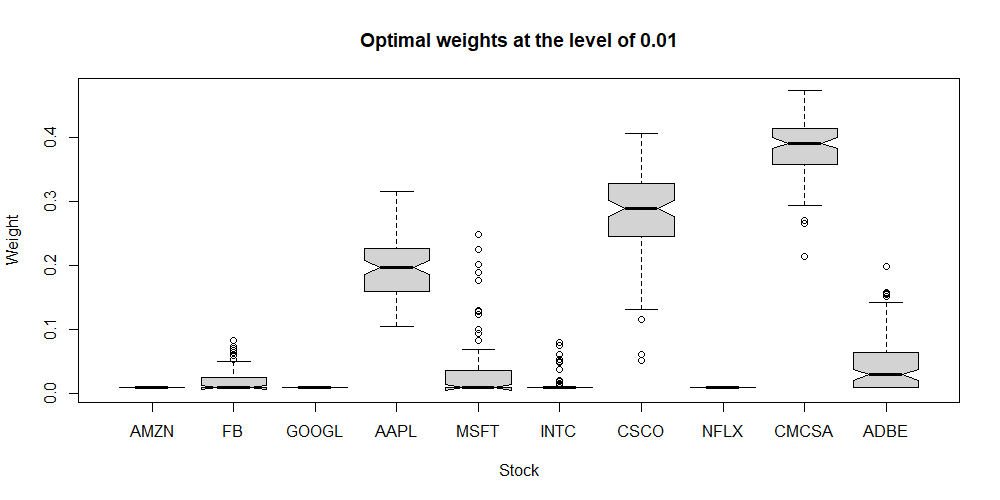}

    \end{subfigure}
    
    \caption{Distribution of optimal weights obtained from simulated data using the estimated copula at the level of $0.10, 0.05$, and $0.01$, for a portfolio of $d=10$ stocks.}\label{fig:10dist}
\end{figure}

\begin{figure}
    \centering
        \begin{subfigure}[b]{0.40\textwidth}
        \includegraphics[width=\textwidth]{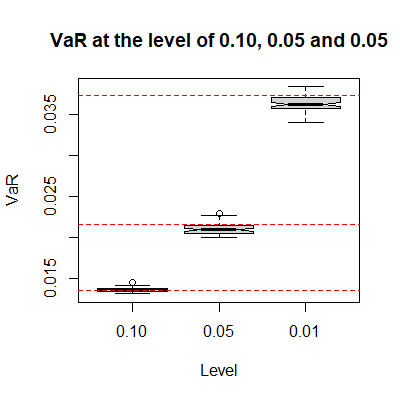}
        \caption{}
    \end{subfigure}
             \begin{subfigure}[b]{0.40\textwidth}
        \includegraphics[width=\textwidth]{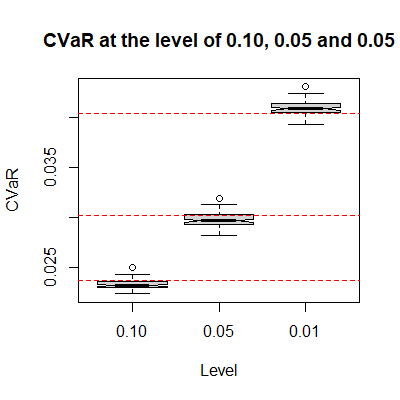}
        \caption{}
    \end{subfigure}

    \caption{Distribution of VaR and CVaR obtained from simulated data using the estimated copula at the level of $0.10, 0.05$, and $0.01$ for a portfolio of $d = 10$ stocks. Dashed lines indicate empirical estimates using historical data.}\label{fig:10dist2}
\end{figure}

We can see from the plots in Fig. \ref{fig:10dist2} that the estimated risk measures from two different methods seem to be fairly close. However, we are able to quantify the uncertainty for all the estimates by repeatedly sampling from the estimated copula. \citet{semenov2017portfolio} conduct a similar stability study to report the means and SDs of VaR and CVaR, but they used predetermined weights based on historical data and didn't report the distribution of optimal weights obtained from simulated data. Our copula estimator has shown good performance for relatively small samples and operationally we can generate as many samples as we want from the estimated copula, thus the copula-based method would be more reliable when there is not sufficient historical data. Besides, compared to the empirical estimates, it is possible to estimate VaR and CVaR for much smaller values of levels using the copula-based method.

\section{Concluding Remarks}
\label{sec:conc}

In this paper, we propose the empirical checkerboard Bernstein copula, which is a nonparametric multivariate copula estimator. It can be considered as an advancement of the empirical Bernstein copula since it is a valid copula with any polynomial degrees for any sample size. For automatic data-dependent dimension-varying degree selections, we further developed an empirical Bayesian method that has been shown to be practically useful. While the proposed copula estimator is shown to be large-sample consistent, it also has a good finite-sample performance. Moreover, it has a beneficial effect on measuring the strength of dependence for large dimensions because the estimates derived from the proposed copula are always within the proper range. 

As sampling from the estimated copula is quite straightforward, it is applicable to portfolio optimization and risk measurement where estimation is often done with simulations generated from copulas. We investigate the number of simulations that are good enough to achieve any given accuracy, which has been apparently out of reach in the literature. Furthermore, we are able to provide uncertainty quantification for all the estimates in portfolio risk management.

Under the hierarchical structure of the proposed empirical Bayes model, MCMC methods have been shown to work reasonably fast for relatively large dimensions ($d \leq 50$) with a moderate sample size ($n=100$). To speed up the MCMC methods for very large sample sizes, it would be of interest to explore some scalable MCMC methods such as divide-and-conquer approaches and subsampling approaches (see, e.g.,  \citet{quiroz2018speeding}, \citet{robert2018accelerating}, etc.). The codes (written using R software) to implement the procedure is available upon request from the first author and could be made available on a GitHub page following the publication of this paper.

\bibliographystyle{myjmva}

\bibliography{main}

\end{document}